\theoremstyle{theorem}
\newtheorem{thm}{\textbf{Theorem}}
\newtheorem{lemma}{Lemma}
\newtheorem{prop}{\textbf{Proposition}}
\newtheorem{coll}{\textbf{Corollary}}
\theoremstyle{remark}
\newtheorem{rmk}{\textbf{Remark}}
\begin{document}
\title{Open, Closed, and Shared Access Femtocells in the Downlink}
\author{Han-Shin Jo,~
        Ping Xia,~
        and Jeffrey G. Andrews~
\thanks{Han-Shin Jo, Ping Xia, and Jeffrey G. Andrews are with Wireless Networking and Communications Group, Department of Electrical and Computer Engineering, The University of Texas at Austin, Austin, TX78712-0240 USA
(e-mail: han-shin.jo@austin.utexas.edu, pxia@mail.utexas.edu and jandrews@ece.utexas.edu).}}
\markboth{} {Shell \MakeLowercase{\textit{et al.}}: Bare Demo of
IEEEtran.cls for Journals} \maketitle

\begin{abstract}
A fundamental choice in femtocell deployments is the set of users which are allowed to access each femtocell. Closed access restricts the set to specifically registered users, while open access allows any mobile subscriber to use any femtocell.  Which one is preferable depends strongly on the distance between the macrocell base station (MBS) and femtocell.  The main results of the paper are lemmas which provide expressions for the SINR distribution for various zones within a cell as a function of this MBS-femto distance.  The average sum throughput (or any other SINR-based metric) of home users and cellular users under open and closed access can be readily determined from these expressions.  We show that unlike in the uplink, the interests of home and cellular users are in conflict, with home users preferring closed access and cellular users preferring open access.  The conflict is most pronounced for femtocells near the cell edge, when there are many cellular users and fewer femtocells.  To mitigate this conflict, we propose a middle way which we term {\it shared} access in which femtocells allocate an adjustable number of time-slots between home and cellular users such that a specified minimum rate for each can be achieved.  The optimal such sharing fraction is derived.  Analysis shows that shared access achieves at least the overall throughput of open access while also satisfying rate requirements, while closed access fails for cellular users and open access fails for the home user.
\end{abstract}
\IEEEpeerreviewmaketitle

\section{Introduction}
Femtocells are small form-factor base stations that can be installed within an existing cellular network.  They can be installed either by an end-user or by the service provider and are distinguished from pico or microcells by their low cost and power and use of basic IP backhaul, and from WiFi by their use of cellular standards and licensed spectrum. Femtocells are a very promising and scalable method for meeting the ever-increasing demands for capacity and high-rate coverage. Since femtocells share spectrum with macrocell networks, managing cross-tier interference between femto- and macrocells is essential \cite{FemtoIntro1}-\cite{FemtoIntro3}. Furthermore, a basic question, particularly for end-user installed femtocells, is which users in the network should be allowed to use a given femtocell.

\subsection{Motivation and Related Work}
Cross-tier interference is highly dependent on this femtocell access decision.  {\it Closed access}, where only specified registered home users can communicate with the femtocell access point (FAP), appears attractive to the home user but can result in severe cross-tier interference from nearby cellular users in the uplink (see Fig. \ref{fig:Loudneighbor}) or to nearby cellular users in the downlink. To reduce this interference in closed access, previous studies have considered power control \cite{FemtoPower1}-\cite{FemtoPower5}, frequency assignment \cite{FemtoFreq1}-\cite{FemtoFreq3}, and a spectrum sensing approach \cite{FemtoCR1,FemtoCR2}. An alternative is to simply hand over cellular users that cause or experience strong interference to the femtocell. This is known as {\it open access}. Intuitively, this should increase the overall network capacity \cite{FemtoAccess1} at the possible expense of a given femtocell owner, who must now share his femtocell resources (time/frequency slots and backhaul) with an unpredictable number of cellular users.

The uplink performance of femtocell access schemes has been investigated in \cite{FemtoAccess2,FemtoAccess3}. The interrelationship between the traffic type, access policy, and performance of high-speed packet access (HSPA) was examined in a simulation-based study \cite{FemtoAccess2}. In \cite{FemtoAccess3}, an analytical framework was presented from open vs. closed access. Both studies suggested a hybrid access approach with an upper limit to the number of unregistered users to access the femtocell. We term this approach {\it shared access} since the femtocell is shared with cellular users, but within limits and hence not fully open. With respect to the uplink throughput of registered home users, open (or shared) access reduces interference by handing over the loud neighbor at the expense of FAP resource sharing. The tradeoff is such that open access is generally preferred for both home users and cellular users, since the interference reduction is so important \cite{FemtoAccess3}. Does the same tradeoff hold in the downlink?

It would seem that the tradeoff is different in the downlink since here the FAP is the loud neighbor (see Fig. \ref{fig:Loudneighbor}). Therefore, serving unregistered users with the FAP benefits them at the cost of FAP resources.  Meanwhile, there is at best a very small decrease in downlink interference to the home user.  The downlink capacity of open vs. closed access has been studied using simulations for HSPA femtocells \cite{FemtoAccess2,FemtoAccess4} and OFDMA femtocells \cite{FemtoAccess5}-\cite{FemtoAccess7}. These studies propose and analyze shared access methods with limits on the number of unregistered users \cite{FemtoAccess4} and frequency subchannels for them \cite{FemtoAccess5,FemtoAccess6}. Indeed, these works find that cellular user performance is improved with open (or shared) access at the cost of reduced home user performance. All these downlink simulations are for very specific scenarios, for example the throughput is averaged over all femtocells and a fixed number of femtocells and outdoor cellular users are considered.


\subsection{Contributions and Main Insights}
Clearly, a more general and analytical approach is desirable. It should include key system parameters such as the distance between FAP and MBS, cell sizes, and the density of femtocells and cellular users.  It would be more realistic if the femtocell and cellular user positions were not fixed, but rather were modeled as a spatial random process (see \cite{PPP2,AndGan10} and references therein).  Ideally, a general statistical distribution of the SINR could be found for both closed and open access. Since metrics such as outage probability, error probability, and throughput follow directly from SINR, once the SINR distribution is known these metrics can be computed quite quickly and easily \cite{AndGan10}. Deriving such an SIR distribution (we neglect both thermal noise and interference from other cells) is the main contribution of the paper, and is used to draw a few conclusions about access strategies in the downlink.

First, we see that unlike the uplink \cite{FemtoAccess3}, the preferred access schemes for home and cellular users are incompatible, with home user preferring closed access. For femtocells with coverage area extending outside the home, i.e. far from the MBS, closed access provides higher sum throughput for home user and lower sum throughput for neighboring cellular users, when compared to open access. For example, of a cell edge femtocell, open access causes at least 20\% throughput loss to home user compared to closed access, while the neighboring cellular user experiences outage for typical data service (less than 15 kbps for 5 MHz bandwidth) in closed access.
When the cellular user density is high (and/or femtocell density is low), the performance difference between open and closed access increases, since cellular users are increasingly impinged upon by the femtocells and vice versa. Furthermore, we observe that the open access femtocells far from MBS reduce the macrocell load, thereby open access rather than closed access offers higher throughput for a few home users (in its femtocell coverage area smaller than home area) located near and connected to the MBS. Nevertheless, most home users in cell site accessing FAPs are still reluctant to use their femtocells in open access.

Since neither open nor closed access can completely satisfy the need of both user groups, we consider a shared access approach where the femtocell has a time-slot ratio $\eta$ between the home and
cellular users it serves, where $\eta=1$ is closed access. An optimal value of $\eta$ is found to maximize \textit{network throughput} subject to QoS requirements on the minimum throughput per home and cellular user. For example, given a cell edge femtocell with minimum throughput of 50 kbps/cellular user and 500 kbps/home user, this shared access approach achieves about 80\% higher network throughput than open access. When the QoS requirements increase in favor of significantly higher throughput of cellular user, shared access provides the lower network throughput than open access.

\section{System Model}
\label{sec:model}
Denote $\mathcal{C}\subset \mathbb{R}^2$ as the circular interior of a macrocell with radius $R_\mathrm{c}$ and area $|\mathcal{C}|=\pi R_\mathrm{c}^2$ centered at a MBS. Since FAPs are installed by the end customer, they are distributed with randomness rather than regular pattern. FAPs $\{A_j\}_{j\in \Phi(\lambda)}$ are thus assumed to be distributed according to a homogeneous PPP with intensity $\lambda$, denoted $\Phi(\lambda)=\{X_j\}$, where each $X_j$ is the location of the $j$th FAP. The mean number of femtocells per cell cite is given as $N_\mathrm{f}=\lambda |\mathcal{C}|$.
Cellular users are assumed to be uniformly distributed inside $\mathcal{C}$. Home users are uniformly deployed in indoor (home) area, a disc of radius $R_\mathrm{i}$ centered at their FAP (see Fig. \ref{fig:coverage}). A summary of notation is given in Table \ref{table_1}.

\subsection{Channel Model and Multi-Level Modulation}
The downlink channel experiences path loss, Rayleigh fading with unit average power, and wall penetration loss $L<1$. The path loss exponents are denoted by $\alpha$ (outdoor and outdoor-to-indoor or indoor-to-outdoor) and $\beta$ (indoor-indoor). As in \cite{FemtoPower5},\cite{FemtoFreq3}, and \cite{ASE}, the downlink femtocell networks are assumed to be interference-limited and thermal noise at the receiver is ignored for simplicity. The MBS and FAP use fixed transmit powers of $P_\mathrm{c}$ and $P_\mathrm{f}$, respectively. We assume that orthogonal multiple access is used (TDMA or OFDMA on a per subband basis), thus no intracell interference is considered.
Interference from neighboring macrocell BSs is ignored for analytical tractability\footnote{Fig. \ref{fig:zoneTput} suggests that the assumption is not a significant omission of interference effects for dense femtocell systems. See Section \ref{sec:Result1} for more discussion}. We consider multi-level M-ary modulation single carrier transmission that is adapted to the received SIR $\gamma$, thus each user is assumed to estimate its SIR and provide perfect SIR feedback to their MBS (or FAP). Define $N$ SIR regions as $R_n=[\Gamma_n, \Gamma_{n+1}), n=1,\cdots,N$, where $\Gamma_1$ is the minimum SIR providing the lowest discrete rate and $\Gamma_{N+1}=\infty$. Then, the instantaneous transmission rate (in bps/Hz) is
\begin{equation}\label{eq:rate}
r_n=\log_2\left(1+\frac{\Gamma_n}{G}\right) \mathrm{~for~} \gamma\in R_n, 1\leq n\leq N,
\end{equation}
where $G$ is the Shannon Gap for multi-level M-ary modulation (and may assume some level of coding). Assuming round robin (RR) scheduling with equal time slots, the average throughput is
\begin{equation}\label{eq:Avgrate}
T = \sum_{n=1}^{N} r_n \cdot \mathbb{P}[\Gamma_n \leq \gamma < \Gamma_{n+1}].
\end{equation}

\subsection{Femtocell Coverage and Access}\label{sec:coverage}
We assume that all users are served by the station (MBS or FAP) from which they receive the strongest long-term average power as their serving stations. Therefore, a femtocell coverage area $\mathcal{F}\subset \mathbb{R}^2$ as the area with a border at which the long-term average received power from a central MBS and the FAP is the same. The coverage is given by the following lemma.
\begin{lemma}
For a FAP at distance $D$ from a central MBS located at the origin, the
border of femtocell coverage is a circle centered at $(\frac{\kappa^{2/\alpha}D}{\kappa^{2/\alpha}-1},0)$ and the radius $R_\mathrm{f}$ is
\begin{equation}\label{eq:Rf}
R_\mathrm{f}=\frac{\kappa^{1/\alpha}D}{|\kappa^{2/\alpha}-1|},
\end{equation}
where $\kappa=\frac{P_\mathrm{c}}{P_\mathrm{f} L}\neq 1$.
\end{lemma}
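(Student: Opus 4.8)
The plan is to recognize the coverage border as an Apollonius circle, i.e.\ the locus of points whose distances to two fixed points (the MBS and the FAP) have a constant ratio. First I would fix coordinates with the MBS at the origin and the FAP at $(D,0)$, and for an arbitrary point $x$ write $r_\mathrm{c}=|x|$ and $r_\mathrm{f}=|x-(D,0)|$ for its distances to the MBS and the FAP. Because the border is defined by equality of \emph{long-term average} received powers, the Rayleigh fading averages out and only path loss and wall penetration remain: from the channel model the two average powers are $P_\mathrm{c}\,r_\mathrm{c}^{-\alpha}$ and $P_\mathrm{f}L\,r_\mathrm{f}^{-\alpha}$, where the femto link picks up the factor $L$ and, since the border lies outdoors, both links use the outdoor exponent $\alpha$.

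Next I would set these two powers equal on the border. Rearranging $P_\mathrm{c}\,r_\mathrm{c}^{-\alpha}=P_\mathrm{f}L\,r_\mathrm{f}^{-\alpha}$ gives $(r_\mathrm{c}/r_\mathrm{f})^{\alpha}=\kappa$, that is $r_\mathrm{c}=\kappa^{1/\alpha}r_\mathrm{f}$, so the border is exactly the set of points whose distance to the MBS is a fixed multiple $\kappa^{1/\alpha}$ of the distance to the FAP. Writing $k:=\kappa^{1/\alpha}$, I would then pass to Cartesian coordinates and square the relation $r_\mathrm{c}=k\,r_\mathrm{f}$ to obtain $x_1^2+x_2^2=k^2\bigl[(x_1-D)^2+x_2^2\bigr]$.

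Collecting terms gives $(k^2-1)(x_1^2+x_2^2)-2k^2Dx_1+k^2D^2=0$; here the hypothesis $\kappa\neq1$ (equivalently $k\neq1$) is exactly what I need, since it keeps the quadratic terms from cancelling and prevents the locus from degenerating into the perpendicular bisector line. Dividing by $k^2-1$ and completing the square in $x_1$ then yields a circle centered at $\bigl(k^2D/(k^2-1),\,0\bigr)$ with radius $kD/|k^2-1|$; substituting back $k^2=\kappa^{2/\alpha}$ and $k=\kappa^{1/\alpha}$ reproduces the claimed center and the expression~\eqref{eq:Rf} for $R_\mathrm{f}$.

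The algebra is routine once the Apollonius structure is spotted, so I do not expect a serious obstacle in the calculation. The one place that needs care is the modeling step rather than the geometry: correctly identifying which propagation factors enter each average received power --- in particular justifying that the border sits outdoors so that the MBS and FAP links share the exponent $\alpha$ while only the FAP link carries the wall loss $L$ --- and getting the direction of the distance ratio right, since an inverted ratio would place the center on the wrong side of the MBS. As a final sanity check I would verify that the resulting disc is indeed the region where the FAP is the stronger station (so that its center lies beyond the FAP, away from the MBS) rather than its complement.
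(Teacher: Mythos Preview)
Your proposal is correct and follows essentially the same route as the paper: fix the MBS at the origin and the FAP at $(D,0)$, equate the long-term received powers to obtain the distance-ratio condition $r_\mathrm{c}/r_\mathrm{f}=\kappa^{1/\alpha}$, square and complete the square to read off the center $\bigl(\kappa^{2/\alpha}D/(\kappa^{2/\alpha}-1),0\bigr)$ and radius $\kappa^{1/\alpha}D/|\kappa^{2/\alpha}-1|$. The only additions relative to the paper are your explicit identification of the locus as an Apollonius circle and your more careful discussion of which propagation factors enter each link; the underlying computation is identical.
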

\begin{proof} Consider a central MBS located at $(0,0)$ and an FAP at distance $D$ from the MBS. Without loss of generality, the FAP is then assumed to be located at $(D,0)$.
The received power at the position $(x,y)$ with distances $d_\mathrm{m}=\sqrt{x^2+y^2}$ and $d_\mathrm{c}=\sqrt{(D-x)^2+y^2}$ from MBS and FAP are respectively given as $P_\mathrm{r}^{(c)} = P_\mathrm{c}P_0(\frac{d_\mathrm{m}}{d_0})^{-\alpha}$ and $P_\mathrm{r}^{(f)} = P_\mathrm{f}L P_0(\frac{d_\mathrm{f}}{d_0})^{-\alpha}$, where $P_0$ is the path loss at a reference distance $d_0$. The contour with $P_\mathrm{r}^{(c)}=P_\mathrm{r}^{(f)}$ (zero dB SIR) satisfies $\frac{x^2+y^2}{(D-x)^2+y^2}=\kappa^{\alpha/2}$,
where $\kappa = \frac{P_\mathrm{c}}{P_\mathrm{f} L}$. For $\kappa\neq1$ the equation is rewritten as $\left(x-\frac{\kappa^{2/\alpha}D}{\kappa^{2/\alpha}-1}\right)^2
+y^2=\frac{\kappa^{2/\alpha}D^2}{(\kappa^{2/\alpha}-1)^2}$,
which is the equation of circle and completes the proof.
\end{proof}
The assumption $\kappa\neq1$ in Lemma 1 is valid for real scenario because where $P_\mathrm{c}>P_\mathrm{f}$ and $L<1$.
Since $\frac{\kappa^{2/\alpha}}{\kappa^{2/\alpha}-1}\approx1$ because $\kappa^{2/\alpha}\gg1$, we assume that the center of the cell coverage is equal to the FAP location. For example, $\frac{\kappa^{2/\alpha}}{\kappa^{2/\alpha}-1}=1.02$ for the values in Table \ref{table_1}. Lemma 1 states that the femtocell coverage $\mathcal{F}$ extends towards the cell edge. This further indicates that for an FAP close to the MBS, femtocell coverage can be smaller than indoor area, whereas for an FAP far from the MBS, the coverage leaks into outdoor area.

When a femtocell operates in closed access, only registered users (termed {\it home users}) can communicate with the femtocell, whereas in open access, unregistered users within the femtocell coverage (termed {\it neighboring cellular users}) as well as home users may connect to the femtocell. Assuming that neighboring cellular users are outdoors, we partition the macrocell into two regions, {\it inner region} and {\it outer region}, with the threshold distance $D_\mathrm{th}$ at which the femtocell coverage area is exactly equal to the indoor (home) area (see Fig. \ref{fig:coverage}). In the inner region, so some ``home users'' actually communicate with the MBS, while in the outer region, neighboring cellular users would like to connect to the FAP. Home and neighboring cellular users have different signal and interference models due to wall penetration loss. These mean that the SIR of all users needs to be modeled differently for the location of users (indoor or outdoor), the type of base station (MBS or FAP), and femtocell access strategy (open or closed). In order to analyze downlink performance for the femtocell access scheme, we thus define four geographic zones, whereby  users in the same zone have the same signal and interference model. Refer Fig. \ref{fig:coverage}.
\begin{itemize}
\item $\mathcal{F}_\mathrm{i}$: the indoor area (a disc with the radius $R_i$) covered by the FAP at $D> D_\mathrm{th}$.
\item $\mathcal{F}_\mathrm{o}$: the outdoor area (a circular annulus with inner radius $R_\mathrm{i}$ and outer radius $R_\mathrm{f}$) covered by the FAP at $D>D_\mathrm{th}$ in open access or covered by the MBS in closed access.
\item $\mathcal{F}_\mathrm{a}$: the indoor area (a disc with the radius $R_\mathrm{f}$) covered by the FAP at $D\leq D_\mathrm{th}$.
\item $\mathcal{F}_\mathrm{b}$: the indoor area (a circular annulus with inner radius $R_\mathrm{f}$ and outer radius $R_\mathrm{i}$ with respect to the FAP at $D\leq D_\mathrm{th}$) covered by the MBS.
\end{itemize}
The zone $\mathcal{F}_\mathrm{i}$ and $\mathcal{F}_\mathrm{o}$ have the property of $\mathcal{F}=\mathcal{F}_\mathrm{i} \cup \mathcal{F}_\mathrm{o}$ and $\mathcal{F}_\mathrm{i}\cap\mathcal{F}_\mathrm{o}=\emptyset$. The zone $\mathcal{F}_\mathrm{a}$ and $\mathcal{F}_\mathrm{b}$ have the property of  $\mathcal{F}_\mathrm{i}=\mathcal{F}_\mathrm{a}\cup \mathcal{F}_\mathrm{b}$ and $\mathcal{F}_\mathrm{a}\cap\mathcal{F}_\mathrm{b}=\emptyset$. Although the cell coverage model based on multiple geographic zones i.e. with multiple SIR model, is more intricate than conventional model \cite{FemtoPower5}, \cite{FemtoFreq3} with single SIR model for closed access only, it is essential for a comparative study of open, closed, and shared access.

\section{Per-Zone Average Throughput}\label{sec:SIR}
Consider a reference FAP $A_0$ at distance $D$ from a central MBS $B_0$, and its home users (or neighboring cellular users) at distance $R$ and $D_\mathrm{c}$ from $A_0$ and $B_0$, respectively. As shown in \ref{sec:coverage}, according to SIR model, all the users divide into four groups located in the zone $\mathcal{F}_\mathrm{a}$, $\mathcal{F}_\mathrm{b}$, $\mathcal{F}_\mathrm{i}$, and $\mathcal{F}_\mathrm{o}$. We analyze the throughput for the each zone, and then, based on it, derive per-tier throughput in next section. We assume small sized femtocell $R\ll D$ resulting $D_\mathrm{c}\approx D$.

\subsection{Cellular User In Zone $\mathcal{F}_\mathrm{o}$}\label{sec:cellSIR}
Since the neighboring cellular users want to hand off to the FAP, they communicate with $A_0$ (open access) or $B_0$ (closed access), which results in different SIR according to the access scheme as follows:
\begin{numcases}
{X =}
\cfrac{P_\mathrm{c} g_0 D^{-\alpha}}{P_\mathrm{f} L h_0 R^{-\alpha}+\sum_{j \in \Phi\backslash{A_0}} P_\mathrm{f} L h_{j} |X_{j}|^{-\alpha}} &
\textrm{~Closed Access} \label{eq:SIR_FoCA}\\
\cfrac{P_\mathrm{f} L h_0 R^{-\alpha}}{P_c g_0 D^{-\alpha}+\sum_{j \in \Phi\backslash{A_0}} P_\mathrm{f} L h_{j} |X_{j}|^{-\alpha}} &
\textrm{~~~Open Access}\label{eq:SIR_FoOA}
\end{numcases}
where $g_{0}$ is the exponentially distributed channel power (with unit mean) from $B_0$. $h_{j}$ is the exponentially distributed channel power (with unit mean) from the interfering FAP $A_j$. $|X_j|$ denotes the distance between the user and $A_j$. The following lemma quantifies the user SIR for the zone $\mathcal{F}_\mathrm{o}$.
\begin{lemma}
The CDF of spatially averaged SIR over $\mathcal{F}_\mathrm{o}$, a circular annulus with outer radius $R_\mathrm{f}$ and inner radius $R_\mathrm{i}$, is given as follows:

1) Closed access:
\begin{align}\label{eq:S-CA}
S_{\mathrm{o}}^{\mathrm{CA}}(\Gamma)&=\mathbb{E}_R \left[\mathbb{P}[\gamma(R) \leq \Gamma]\right],~ R \in [R_\mathrm{i}, R_\mathrm{f}]\cr
&=\!
\!1\!-\!\frac{e^{-\lambda C_\alpha (K \Gamma)^{2/\alpha}}}{R_\mathrm{f}^2-R_\mathrm{i}^2}\!\left(\!R_\mathrm{f}^2\!-\!R_\mathrm{i}^2\!+\!R_\mathrm{i}^2
~\!_2F_1\!\left[\tfrac{2}{\alpha},1;1\!+\!\tfrac{2}{\alpha};
\tfrac{-R_\mathrm{i}^\alpha}{K \Gamma}\right]
\!-\!R_\mathrm{f}^2~\!_2F_1\!\left[\tfrac{2}{\alpha},1;1\!+\!\tfrac{2}{\alpha};
\tfrac{-R_\mathrm{f}^\alpha}{K \Gamma}\right]\right)
\end{align}
where $C_\alpha=\frac{2\pi^2}{\alpha}\csc(\frac{2\pi}{\alpha})$, $K=\frac{P_\mathrm{f}LD^{\alpha}}{P_\mathrm{c}}$, and $_2F_1[\cdot]$ is the Gauss hypergeometric function.

2) Open access:
\begin{equation}\label{eq:S-OA}
S_{\mathrm{o}}^{\mathrm{OA}}(\Gamma)=
1-\frac{1}{R_\mathrm{f}^2-R_\mathrm{i}^2}\int_{R_\mathrm{i}^2}^{R_\mathrm{f}^2}
\frac{e^{-r\lambda C_\alpha\Gamma^{2/\alpha}}}{K^{-1} \Gamma  r^{\alpha/2}+1}\mathrm{d}r.
\end{equation}
If $\alpha=4$, then
\begin{equation}\label{eq:S-OA1}
S_{\mathrm{o}}^{\mathrm{OA}}(\Gamma)= 1-\frac{B(R_\mathrm{f}^2)-B(R_\mathrm{i}^2)}{(R_\mathrm{f}^2-R_\mathrm{i}^2)},
\end{equation}
\begin{equation*}\label{eq:B(x)}
B(x)\!=\!\tfrac{1}{\sqrt{z}}\!\Big[\!\big(\!-\!\mathrm{Re}\{\mathrm{Ei}(iw)\}
\!+\!\mathrm{Re}\{\mathrm{Ei}(xy\!+\!iw)\}\big)\!\sin(w)
\!+\!\big(\mathrm{Im}\{\mathrm{Ei}(iw)\}\!-\!\mathrm{Im}\{\mathrm{Ei}(xy\!+\!iw)\}\!\big)\!\cos(w)\!\Big]\!,
\end{equation*}
\begin{equation}
y = -\lambda C_\alpha\sqrt{\Gamma}, ~ z=\Gamma K^{-1}, ~ w=y/\sqrt{z}=-\lambda C_\alpha\sqrt{K},
\end{equation}
where $\mathrm{Re}\{z\}$ and $\mathrm{Im}\{z\}$ represent the real and imaginary parts of $z$, respectively, $\mathrm{Ei}(z)=\int_{-\infty}^{z}\frac{e^{-t}}{t}dt$ is the Exponential integral function.
\end{lemma}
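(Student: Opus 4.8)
The plan is to handle closed and open access in parallel, since the two SIR expressions in the statement share the same structure: a Rayleigh-faded desired signal, a single dominant interferer that is also Rayleigh faded, and an aggregate interference term produced by the femtocell PPP $\Phi$. First I would condition on the distance $R$ from the user to the reference FAP $A_0$ and compute the conditional complementary CDF $\mathbb{P}[\gamma(R)>\Gamma]$. The numerator fade is unit-mean exponential ($g_0$ in closed access, $h_0$ in open access), so $\{\gamma(R)>\Gamma\}$ is equivalent to that fade exceeding a multiple of the total interference; using $\mathbb{P}[\mathrm{Exp}(1)>a]=e^{-a}$ turns the conditional CCDF into $\mathbb{E}[e^{-s I_{\mathrm{tot}}}]$ for the appropriate $s$.

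Next I would factor this expectation over the two independent sources of interference. The dominant interferer ($A_0$ for closed access, $B_0$ for open access) carries its own unit-mean exponential fade, and averaging over it gives an elementary rational factor, namely $\tfrac{R^\alpha}{R^\alpha+K\Gamma}$ (closed) and $\tfrac{1}{1+K^{-1}\Gamma R^\alpha}$ (open) once $K=\tfrac{P_\mathrm{f}LD^\alpha}{P_\mathrm{c}}$ is substituted. The remaining factor is the Laplace functional of $\sum_{j\in\Phi\setminus A_0}P_\mathrm{f}Lh_j|X_j|^{-\alpha}$; by Slivnyak's theorem the removal of $A_0$ leaves the law of the residual process unchanged, so I would apply the probability generating functional of a homogeneous PPP on $\mathbb{R}^2$, $\mathbb{E}[e^{-sI}]=\exp(-\lambda\int_{\mathbb{R}^2}(1-\mathbb{E}_h[e^{-sP_\mathrm{f}Lh|x|^{-\alpha}}])\,dx)$. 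Averaging over the fades and passing to polar coordinates reduces the exponent to a radial integral that the substitution $t=r^\alpha/(sP_\mathrm{f}L)$ together with $\int_0^\infty\tfrac{t^{2/\alpha-1}}{1+t}\,dt=\tfrac{\pi}{\sin(2\pi/\alpha)}$ (valid for $\alpha>2$) evaluates to $\exp(-\lambda C_\alpha(sP_\mathrm{f}L)^{2/\alpha})$ with $C_\alpha=\tfrac{2\pi^2}{\alpha}\csc\tfrac{2\pi}{\alpha}$. Inserting the two values of $s$ gives the factors $e^{-\lambda C_\alpha(K\Gamma)^{2/\alpha}}$ (closed, independent of $R$) and $e^{-\lambda C_\alpha\Gamma^{2/\alpha}R^2}$ (open).

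It remains to average over $R$, which is uniform on the annulus $\mathcal{F}_\mathrm{o}$ and therefore has density $\tfrac{2r}{R_\mathrm{f}^2-R_\mathrm{i}^2}$ on $[R_\mathrm{i},R_\mathrm{f}]$. For closed access the PPP factor is constant in $R$ and pulls out, leaving $\int_{R_\mathrm{i}}^{R_\mathrm{f}}\tfrac{2r\,r^\alpha}{r^\alpha+K\Gamma}\,dr$; writing $\tfrac{r^\alpha}{r^\alpha+K\Gamma}=1-\tfrac{1}{1+r^\alpha/(K\Gamma)}$, expanding the second term as a geometric series and integrating term by term, I would recognise $\sum_{n\ge0}\tfrac{2}{n\alpha+2}\big({-x^\alpha}/(K\Gamma)\big)^n=x^2\,{}_2F_1[\tfrac{2}{\alpha},1;1+\tfrac{2}{\alpha};-x^\alpha/(K\Gamma)]$, evaluated at $x=R_\mathrm{f}$ and $x=R_\mathrm{i}$, which assembles into \eqref{eq:S-CA}. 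For open access the $R$-dependence sits in both factors, so after $r=\rho^2$ the average is exactly \eqref{eq:S-OA}, which holds for general $\alpha>2$.

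The main obstacle is the closed-form reduction of the open-access integral at $\alpha=4$. There $1+K^{-1}\Gamma r^2$ factors into complex-conjugate linear terms with roots $\pm i/\sqrt{z}$, $z=\Gamma K^{-1}$, and I would split $\tfrac{1}{1+zr^2}=\tfrac{1}{2i\sqrt{z}}\big(\tfrac{1}{r-i/\sqrt{z}}-\tfrac{1}{r+i/\sqrt{z}}\big)$ and integrate each piece against the exponential-integral antiderivative of $e^{yr}/(r\mp i/\sqrt{z})$, with $y=-\lambda C_\alpha\sqrt{\Gamma}$. Each piece becomes $e^{\pm iw}\mathrm{Ei}(yr\mp iw)$ where $w=y/\sqrt{z}=-\lambda C_\alpha\sqrt{K}$; since $\mathrm{Ei}(yr-iw)=\overline{\mathrm{Ei}(yr+iw)}$ for real arguments, combining the two conjugate contributions via Euler's formula collapses them to the $\sin w$ and $\cos w$ weighted real and imaginary parts of $\mathrm{Ei}(xy+iw)$ that define $B(x)$, and normalising the antiderivative so that $B(0)=0$ supplies the constant $\mathrm{Ei}(iw)$ terms; this yields \eqref{eq:S-OA1}. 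The only real care needed is bookkeeping the branch of $\mathrm{Ei}$ on complex arguments and the signs of $y$ and $w$; no new idea beyond the partial-fraction-plus-$\mathrm{Ei}$ reduction is required.
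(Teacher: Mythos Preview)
Your proposal is correct and follows essentially the same route as the paper: condition on $R$, exploit the exponential numerator fade to turn the CCDF into a product of Laplace transforms, evaluate the single-interferer factor as a rational function and the PPP shot-noise factor via its Laplace functional, then average over the annulus density $2r/(R_\mathrm{f}^2-R_\mathrm{i}^2)$ and, for $\alpha=4$ in open access, reduce via partial fractions to exponential integrals. The only cosmetic differences are that the paper cites \cite{Baccelli} for the shot-noise Laplace transform and an integral table \cite{HyperGeo} for the hypergeometric antiderivative, whereas you derive both directly; your geometric-series derivation of the ${}_2F_1$ identity is formally valid only for $r^\alpha<K\Gamma$ and needs analytic continuation to cover the full range, but the resulting antiderivative is the same as the tabulated one.
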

\begin{proof} See Appendix~\ref{sec:PfLem2}.
\end{proof}

\subsection{Home User SIR In Zone $\mathcal{F}_\mathrm{i}$ or $\mathcal{F}_\mathrm{a}$}\label{sec:homeSIR}
The home user SIR for the zone $\mathcal{F}_\mathrm{i}$ is given as
\begin{equation}
\gamma(R) = \frac{P_\mathrm{f} h_0 R^{-\beta}}{P_\mathrm{c} L g_0 D^{-\alpha}+\sum_{j \in \Phi\backslash{A_0}} P_\mathrm{f} L^2 h_{j} |X_{j}|^{-\alpha}}
,\label{eq:SIR_Fi}
\end{equation}
This is the same as the cellular user SIR for open access given in (\ref{eq:SIR_FoOA}) except for the distinction that this home user is indoors and so the propagation terms are adjusted accordingly. The SIR distribution of the home user in $\mathcal{F}_\mathrm{i}$ is given in the following lemma.
\begin{lemma}
The CDF of spatially averaged SIR over the zone $\mathcal{F}_\mathrm{i}$, a disc with radius $R_\mathrm{i}$, is given as
\begin{align}\label{eq:S-F}
S_{\mathrm{i}}(\Gamma)
&=\mathbb{E}_R \left[\mathbb{P}[\gamma(R) \leq \Gamma]\right],~ R \in [0, R_\mathrm{i}]\cr
&=1-\frac{2}{R_\mathrm{i}^2}\int_{0}^{R_\mathrm{i}}
\frac{r\cdot e^{-\lambda C_\alpha (L^2\Gamma)^{2/\alpha}r^{2\beta/\alpha}}}{K\Gamma r^{\beta}+1}dr,
\end{align}
where $K=\frac{P_\mathrm{c} L}{P_\mathrm{f} D^{\alpha}}$.
If $\alpha=4$ and $\beta=2$, then
\begin{equation}\label{eq:S-F1}
S_{\mathrm{i}}(\Gamma)= 1-H(R_\mathrm{i})/R_\mathrm{i}^2,
\end{equation}
\begin{equation*}
H(x)\!=\!\tfrac{2}{z}\left[\left(\mathrm{Re}\{\mathrm{Ei}(xy\!+\!iw)\}
\!-\!\mathrm{Re}\{\mathrm{Ei}(iw)\}\right)\cos(w)
\!+\!\left(\mathrm{Im}\{\mathrm{Ei}(xy+iw)\}\!-\!\mathrm{Im}\{\mathrm{Ei}(iw)\}
\right)\sin(w)\right],
\end{equation*}
\begin{equation}\label{eq:S-F1-1}
y= -\lambda C_\alpha\sqrt{\Gamma}L, ~ z=K\Gamma , ~ w=y/\sqrt{z}=-\lambda C_\alpha L/\sqrt{K},
\end{equation}
Or, if $\alpha=\beta=4$, then
\begin{equation}\label{eq:S-F2}
S_{\mathrm{i}}(\Gamma)= 1-B(R_\mathrm{i}^2)/R_\mathrm{i}^2,
\end{equation}
where $w$, $y$, and $z$ for calculating $B(R_\mathrm{i}^2)$ are given in (\ref{eq:S-F1-1}).
\end{lemma}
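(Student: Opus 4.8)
The plan is to first compute the conditional complementary CDF $\mathbb{P}[\gamma(R)>\Gamma]$ for a fixed home-user distance $R$, and then average over the user's uniform position in the disc of radius $R_\mathrm{i}$. Starting from the SIR in (\ref{eq:SIR_Fi}), I exploit that the desired-signal power $h_0$ is exponential with unit mean: conditioning on everything else and rewriting $\{\gamma(R)>\Gamma\}$ as $\{h_0>(\Gamma R^\beta/P_\mathrm{f})(P_\mathrm{c}Lg_0D^{-\alpha}+I)\}$, where $I=\sum_{j\in\Phi\backslash A_0}P_\mathrm{f}L^2h_j|X_j|^{-\alpha}$ is the aggregate FAP interference, turns the tail probability into $\mathbb{E}[\exp(-(\Gamma R^\beta/P_\mathrm{f})(P_\mathrm{c}Lg_0D^{-\alpha}+I))]$. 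Since the MBS fading $g_0$ and the interference field $I$ are independent, this expectation factors into a product of two Laplace transforms evaluated at $s=\Gamma R^\beta/P_\mathrm{f}$. This is the same device used for the open-access cellular user in (\ref{eq:SIR_FoOA}), with the indoor propagation terms inserted.

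The first factor is immediate: because $g_0$ is exponential, its Laplace transform at argument $sP_\mathrm{c}LD^{-\alpha}=K\Gamma R^\beta$ (using $K=P_\mathrm{c}L/(P_\mathrm{f}D^\alpha)$) is $1/(1+K\Gamma R^\beta)$, which is exactly the denominator in (\ref{eq:S-F}). For the second factor I would apply the probability generating functional of the homogeneous PPP $\Phi(\lambda)$: with exponential marks the per-point contribution is $1-\mathbb{E}_h[\exp(-sP_\mathrm{f}L^2h|x|^{-\alpha})]=a/(|x|^\alpha+a)$ with $a=sP_\mathrm{f}L^2$, and integrating over $\mathbb{R}^2$ in polar coordinates (substituting $t=r^\alpha/a$ and using $\int_0^\infty t^{2/\alpha-1}/(1+t)\,\mathrm{d}t=\pi\csc(2\pi/\alpha)$) yields $\exp(-\lambda C_\alpha(sP_\mathrm{f}L^2)^{2/\alpha})$, with $C_\alpha=(2\pi^2/\alpha)\csc(2\pi/\alpha)=\pi\Gamma(1+2/\alpha)\Gamma(1-2/\alpha)$ by Euler's reflection formula. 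Substituting $s=\Gamma R^\beta/P_\mathrm{f}$ gives $(sP_\mathrm{f}L^2)^{2/\alpha}=(L^2\Gamma)^{2/\alpha}R^{2\beta/\alpha}$, precisely the exponent in (\ref{eq:S-F}). Multiplying the two factors, subtracting from one, and integrating against the disc density $f_R(r)=2r/R_\mathrm{i}^2$ over $[0,R_\mathrm{i}]$ then produces (\ref{eq:S-F}).

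For the closed forms I specialize the exponent. When $\alpha=4,\beta=2$ the exponent is linear and the denominator quadratic, so the integrand is $re^{yr}/(1+zr^2)$ with $y,z,w$ as in (\ref{eq:S-F1-1}); splitting $r/(1+zr^2)$ by partial fractions over the conjugate poles $\pm i/\sqrt{z}$, each piece integrates to an exponential-integral term of the form $e^{ay}\,\mathrm{Ei}(y(r-a))$, and recombining the conjugate contributions and evaluating from $0$ to $R_\mathrm{i}$ collapses to the real/imaginary-part expression $H$ in (\ref{eq:S-F1}). When $\alpha=\beta=4$ the substitution $x=r^2$ reduces the integral to $\tfrac12\int_0^{R_\mathrm{i}^2}e^{yx}/(1+zx^2)\,\mathrm{d}x$, which is exactly the integral already evaluated as $B(\cdot)$ in Lemma~2; hence (\ref{eq:S-F2}) follows by direct reference, with the constants $y,z,w$ replaced by the indoor values in (\ref{eq:S-F1-1}).

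The routine parts are the PPP Laplace transform and the final disc averaging. The delicate step I anticipate is the complex partial-fraction evaluation in the $\alpha=4,\beta=2$ case: one must track which terms fall into the real versus imaginary parts of $\mathrm{Ei}$ at the complex argument, and verify that the phase contributed by the pole locations $\pm i/\sqrt{z}$ cancels so that $H$ is manifestly real. A secondary check is confirming that the $\alpha=\beta=4$ integral coincides with $B$ of Lemma~2 after the change of variables, so that no separate evaluation is needed.
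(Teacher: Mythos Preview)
Your proposal is correct and follows essentially the same route as the paper's proof: exploit the exponential tail of $h_0$ to factor $\mathbb{P}[\gamma(R)>\Gamma]$ into the Laplace transform of the scaled exponential $Kg_0$ and the PPP shot-noise Laplace transform, then average against $f_R(r)=2r/R_\mathrm{i}^2$; for the closed forms you use the same substitution $x=r^2$ and reference to $B(\cdot)$ in the $\alpha=\beta=4$ case, and a partial-fraction/Ei evaluation (which the paper obtains by citing a tabulated integral and the mirror symmetry $\mathrm{Ei}(\bar z)=\overline{\mathrm{Ei}(z)}$) in the $\alpha=4,\beta=2$ case. The only cosmetic difference is that you derive the PGFL integral explicitly via the Beta/reflection identity, whereas the paper quotes the result from \cite{Baccelli}.
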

\begin{proof} See Appendix~\ref{sec:PfLem4}.
\end{proof}
Note that the user SIR in the zone $\mathcal{F}_\mathrm{a}$ is given in (\ref{eq:SIR_Fi}), since the zone is the indoor area covered by a FAP at $D\leq D_\mathrm{th}$ (in the inner region). The SIR distribution of the home users in $\mathcal{F}_\mathrm{a}$ is given in the following corollary.
\begin{coll}
The user SIR in the zone $\mathcal{F}_\mathrm{a}$ is given by (\ref{eq:SIR_Fi}) and $\mathcal{F}_\mathrm{a}$ is a disc with radius $R_\mathrm{f}$. Thus, the CDF of spatially averaged SIR over the zone $\mathcal{F}_\mathrm{a}$ is given by (\ref{eq:S-F}) with $R_f$ replacing $R_i$, i.e.,
\begin{equation}\label{eq:S-Fa}
S_{\mathrm{a}}(\Gamma) = S_{\mathrm{i}}(\Gamma)|_{R_i=R_f}.
\end{equation}
\end{coll}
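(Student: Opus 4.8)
The plan is to observe that this corollary is an immediate specialization of the preceding lemma (the one yielding (\ref{eq:S-F})), since the two zones $\mathcal{F}_\mathrm{i}$ and $\mathcal{F}_\mathrm{a}$ share an identical per-user SIR model and differ only in the disc over which the user location is spatially averaged. First I would note that the instantaneous SIR of a home user in $\mathcal{F}_\mathrm{a}$ is governed by the very same expression (\ref{eq:SIR_Fi}) as in $\mathcal{F}_\mathrm{i}$: in both zones the user is indoors and served by the reference FAP $A_0$, the desired signal carries the indoor exponent $\beta$, and the interference is the MBS term through one wall plus the aggregate femto interference through two walls. Consequently the conditional law $\mathbb{P}[\gamma(R)\le\Gamma\mid R=r]$ — which is exactly what is integrated in the proof of the preceding lemma to produce the integrand of (\ref{eq:S-F}) — is one and the same function of $r$ for both zones.

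The only distinction is the spatial-averaging domain. In $\mathcal{F}_\mathrm{i}$ the home user is uniform on the disc of radius $R_\mathrm{i}$, so the radial coordinate has density $2r/R_\mathrm{i}^2$ on $[0,R_\mathrm{i}]$; in $\mathcal{F}_\mathrm{a}$ it is uniform on the disc of radius $R_\mathrm{f}$, giving density $2r/R_\mathrm{f}^2$ on $[0,R_\mathrm{f}]$. I would therefore carry out the identical averaging step,
\[
S_\mathrm{a}(\Gamma)=\mathbb{E}_R\!\left[\mathbb{P}[\gamma(R)\le\Gamma]\right]=\int_{0}^{R_\mathrm{f}}\mathbb{P}[\gamma(r)\le\Gamma]\,\frac{2r}{R_\mathrm{f}^2}\,\mathrm{d}r,
\]
and observe that this is precisely the right-hand side of (\ref{eq:S-F}) with every occurrence of $R_\mathrm{i}$ — both the upper limit of integration and the normalizing factor $R_\mathrm{i}^2$ — replaced by $R_\mathrm{f}$, which establishes (\ref{eq:S-Fa}).

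There is essentially no obstacle here: no new integral must be evaluated, and no assumption beyond $\mathcal{F}_\mathrm{a}$ being an indoor disc of radius $R_\mathrm{f}$ is needed. The only point worth stating carefully is the justification that the integrand is unchanged, i.e.\ that being in the inner region ($D\le D_\mathrm{th}$) rather than the outer region affects only the radius $R_\mathrm{f}$ at which the FAP and MBS received powers equalize (via Lemma~1), and hence only the size of the disc, not the functional form of the SIR in (\ref{eq:SIR_Fi}). The closed-form special cases then follow identically by substituting $R_\mathrm{f}$ for $R_\mathrm{i}$ throughout the expressions built from $H(\cdot)$ (for $\alpha=4,\beta=2$) and $B(\cdot)$ (for $\alpha=\beta=4$).
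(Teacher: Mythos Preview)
Your argument is correct and matches the paper's own reasoning: the paper gives no separate proof of this corollary, the justification being contained in the statement itself (same SIR model (\ref{eq:SIR_Fi}), disc of radius $R_\mathrm{f}$ instead of $R_\mathrm{i}$, hence the substitution). Your write-up simply spells this out in a bit more detail, which is fine.
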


\subsection{Home User SIR In Zone $\mathcal{F}_\mathrm{b}$}
The zone $\mathcal{F}_\mathrm{b}$ exists for the femtocells with $D\leq D_\mathrm{th}$ (in the inner region). The user SIR is
\begin{equation}
\gamma(R) = \frac{P_\mathrm{c} g_0 L D^{-\alpha}}{P_\mathrm{f} h_0 R^{-\beta}+\sum_{j \in \Phi\backslash{A_0}} P_\mathrm{f} L^2 h_{j} |X_{j}|^{-\alpha}}
.\label{eq:SIR_Fb}
\end{equation}
The SIR is the same as the cellular user SIR for closed access given in (\ref{eq:SIR_FoCA}) except for the distinction that this home user is indoors and so the propagation terms are adjusted accordingly. Thus, the SIR distribution for $\mathcal{F}_\mathrm{b}$ is given in similar form to (\ref{eq:S-CA}) as shown in the following lemma.
\begin{lemma}
The CDF of spatially averaged SIR over the area $\mathcal{F}_\mathrm{b}$, a circular annulus with outer radius $R_\mathrm{i}$ and inner radius $R_\mathrm{f}$, is given as
\begin{align}\label{eq:S-Fb}
S_{\mathrm{b}}(\Gamma)&=\mathbb{E}_R \left[\mathbb{P}[\gamma(R) \leq \Gamma]\right],~ R \in [R_\mathrm{f}, R_\mathrm{i}] \cr
&=
1\!-\!\frac{e^{-\lambda C_\alpha (L^2 K \Gamma )^{2/\alpha}}}{R_\mathrm{i}^2-R_\mathrm{f}^2}  \!\left( \! R_\mathrm{i}^2\!-\!R_\mathrm{f}^2 \! + \! R_\mathrm{f}^2
~_2F_1\!\left[\!\tfrac{2}{\beta},1;1\!+\!\tfrac{2}{\beta};
 \!\tfrac{ -R_\mathrm{f}^\beta}{K \Gamma}\!\right]\!
\!-\!R_\mathrm{i}^2~_2F_1\!\left[\!\tfrac{2}{\beta},1;1 \! + \! \tfrac{2}{\beta};
\!\tfrac{ -R_\mathrm{i}^\beta}{K \Gamma}\!\right]\!\right)\!.
\end{align}
\end{lemma}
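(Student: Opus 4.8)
The plan is to mirror the closed-access derivation behind~(\ref{eq:S-CA}) in Lemma 2, exploiting that the SIR~(\ref{eq:SIR_Fb}) has the same algebraic shape as the closed-access SIR~(\ref{eq:SIR_FoCA}): an exponentially faded desired signal from the MBS at the (fixed) distance $D$, one interfering term from the serving cell's own FAP $A_0$, and the aggregate interference of the PPP $\Phi\backslash A_0$. The only differences are that the $A_0$ interference now propagates indoor-to-indoor, so it uses the exponent $\beta$ and carries no wall-loss factor, whereas the desired MBS signal and the PPP interference each acquire an extra penetration factor $L$. Crucially, the desired signal originates at the MBS at the fixed distance $D$ (using $D_\mathrm{c}\approx D$), so the Laplace variable introduced below is independent of $R$; this is exactly what yields a closed form in ${}_2F_1$, in contrast to the $\mathcal{F}_\mathrm{i}$ situation of Lemma 3 where the signal distance is itself the integration variable.

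First I would condition on the radius $R$ and write $S_\mathrm{b}(\Gamma)=1-\mathbb{E}_R[\mathbb{P}[\gamma(R)>\Gamma]]$. Because the serving-link power $g_0$ is unit-mean exponential, $\mathbb{P}[\gamma(R)>\Gamma\mid R,I]=e^{-sI}$ with $s=\Gamma D^{\alpha}/(P_\mathrm{c}L)$ and $I$ the total interference, so $\mathbb{P}[\gamma(R)>\Gamma\mid R]=\mathbb{E}[e^{-sI}]$ is the Laplace transform of $I$ at $s$. Since $I$ is the independent sum of the own-FAP term $P_\mathrm{f}h_0R^{-\beta}$ and the PPP term $\sum_{j}P_\mathrm{f}L^2h_j|X_j|^{-\alpha}$, this Laplace transform factorizes into a product of the two corresponding transforms.

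Evaluating the own-FAP factor with $h_0$ exponential gives $\mathbb{E}_{h_0}[e^{-sP_\mathrm{f}R^{-\beta}h_0}]=1/(1+sP_\mathrm{f}R^{-\beta})=R^{\beta}/(R^{\beta}+K\Gamma)$, where $K=P_\mathrm{f}D^{\alpha}/(P_\mathrm{c}L)$ absorbs the constants. For the PPP factor I would invoke the Laplace functional of a homogeneous PPP, using Slivnyak's theorem so that removing $A_0$ leaves a statistically identical process, obtaining $\exp\!\big(-\lambda\int_{\mathbb{R}^2}(1-\mathbb{E}_h[e^{-sP_\mathrm{f}L^2h|x|^{-\alpha}}])\,\mathrm{d}x\big)$. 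Passing to polar coordinates and using $\mathbb{E}_h[e^{-ah}]=1/(1+a)$ reduces the exponent to $2\pi\lambda a\int_0^{\infty}\tfrac{v}{v^{\alpha}+a}\,\mathrm{d}v$ with $a=sP_\mathrm{f}L^2$; the scaling $v\mapsto a^{1/\alpha}v$ then gives $2\pi\lambda a^{2/\alpha}\int_0^{\infty}\tfrac{u}{u^{\alpha}+1}\,\mathrm{d}u=\lambda C_\alpha a^{2/\alpha}=\lambda C_\alpha(L^2K\Gamma)^{2/\alpha}$, using $\int_0^{\infty}\tfrac{u}{u^{\alpha}+1}\,\mathrm{d}u=\tfrac{\pi/\alpha}{\sin(2\pi/\alpha)}$ and $C_\alpha=\tfrac{2\pi^2}{\alpha}\csc(\tfrac{2\pi}{\alpha})$. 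Hence $\mathbb{P}[\gamma(R)>\Gamma\mid R]=\tfrac{R^{\beta}}{R^{\beta}+K\Gamma}\,e^{-\lambda C_\alpha(L^2K\Gamma)^{2/\alpha}}$.

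Finally I would average over the annulus $R\in[R_\mathrm{f},R_\mathrm{i}]$, where $R$ has density $2r/(R_\mathrm{i}^2-R_\mathrm{f}^2)$, so that $S_\mathrm{b}(\Gamma)=1-\tfrac{2e^{-\lambda C_\alpha(L^2K\Gamma)^{2/\alpha}}}{R_\mathrm{i}^2-R_\mathrm{f}^2}\int_{R_\mathrm{f}}^{R_\mathrm{i}}\tfrac{r^{\beta+1}}{r^{\beta}+K\Gamma}\,\mathrm{d}r$. Splitting $\tfrac{r^{\beta+1}}{r^{\beta}+K\Gamma}=r-\tfrac{K\Gamma\,r}{r^{\beta}+K\Gamma}$ and applying the identity $\int_0^{x}\tfrac{r}{r^{\beta}+c}\,\mathrm{d}r=\tfrac{x^2}{2c}\,{}_2F_1[1,\tfrac{2}{\beta};1+\tfrac{2}{\beta};-x^{\beta}/c]$ (whose first two parameters may be swapped) with $c=K\Gamma$ turns the integral into the combination of ${}_2F_1$ terms at $-R_\mathrm{f}^{\beta}/(K\Gamma)$ and $-R_\mathrm{i}^{\beta}/(K\Gamma)$ appearing in~(\ref{eq:S-Fb}). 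The only technical work is the PPP Laplace-functional integral and this hypergeometric reduction; both are structurally identical to the closed-access computation of Lemma 2 after the substitutions $\alpha\to\beta$ in the own-interferer exponent and the bookkeeping of the $L$ factors through $s$, so I expect no genuinely new obstacle beyond carefully tracking these constants.
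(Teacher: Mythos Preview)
Your proposal is correct and follows essentially the same route as the paper's proof: rewrite $\gamma(R)=g_0/(K(I_1+I_2))$ with $K=P_\mathrm{f}D^{\alpha}/(P_\mathrm{c}L)$, use the exponential law of $g_0$ to factor the CCDF into $\mathcal{L}_{I_1}(\Gamma K)\mathcal{L}_{I_2}(\Gamma K)$, evaluate the two Laplace transforms exactly as you do, and then average over $R$ with density $2r/(R_\mathrm{i}^2-R_\mathrm{f}^2)$. The paper dispatches the last integral via the single identity $\int\tfrac{t}{at^{-\beta}+1}\,\mathrm{d}t=\tfrac{1}{2}t^2\big(1-{}_2F_1[\tfrac{2}{\beta},1;1+\tfrac{2}{\beta};-t^{\beta}/a]\big)$ rather than your split-and-integrate step, but the two are algebraically equivalent.
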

where $K=\frac{P_\mathrm{f} D^{\alpha}}{P_\mathrm{c} L}$.
\begin{proof} See Appendix~\ref{sec:PfLem5}.
\end{proof}
Finally, combining equation (\ref{eq:Avgrate}) and the SIR distribution ((\ref{eq:S-CA}), (\ref{eq:S-OA}), (\ref{eq:S-F}), (\ref{eq:S-Fa}), and (\ref{eq:S-Fb})), the average throughput of each zone $\mathcal{F}_\mathrm{a}$, $\mathcal{F}_\mathrm{b}$, $\mathcal{F}_\mathrm{i}$, and $\mathcal{F}_\mathrm{o}$ is given as
\begin{equation}\label{eq:T-Fall}
T_{x} = \sum_{n=1}^{N} r_n \left[S_{x} (\Gamma_{n+1}) - S_{x} (\Gamma_{n})\right],~~x\in\{\mathrm{a},\mathrm{b},\mathrm{i},\mathrm{o}\}
\end{equation}

\subsection{Numerical Results}\label{sec:Result1}
Fig. \ref{fig:zoneTput} shows the spatially averaged throughput for zone $\mathcal{F}_\mathrm{a}$, $\mathcal{F}_\mathrm{b}$, $\mathcal{F}_\mathrm{i}$, and $\mathcal{F}_\mathrm{o}$ versus FAP $A_0$ of distance $D$ from the MBS $B_0$ for a different number of femtocells. Here the system parameters in Table \ref{table_1} are used. For the zone $\mathcal{F}_\mathrm{o}$, we show two results $T_\mathrm{o}^\mathrm{CA}$ (closed access) and $T_\mathrm{o}^\mathrm{OA}$ (open access). The analytic curves given from (\ref{eq:T-Fall}) are very close to the simulated curves. Furthermore, there is not a considerable difference in throughput between the inter-macrocell interference (marked with ``$\circ$'' and ``$\Box$'') and no inter-macrocell interference (marked with ``$\times$'' and ``$+$''), which validates our assumption that neglecting inter-macrocell interference is acceptable in dense deploy femto.

The higher interference from neighboring femtocells caused by higher $N_\mathrm{f}$ decreases the averaged throughput of all the zones. On the other hand, $D$ has different effects in different zones as follows. First, since increasing $D$ decreases the signal power from $B_0$, the average throughput ($T_\mathrm{b}$ and $T_\mathrm{o}^\mathrm{CA}$) served by the $B_0$ decreases with $D$.
Second, the area of zones $\mathcal{F}_\mathrm{a}$ and $\mathcal{F}_\mathrm{o}$ increases with $D$, which enlarges the area with low SIR within the zone, i.e. it has a negative effect on SIR. However, for the users in $\mathcal{F}_\mathrm{a}$ and $\mathcal{F}_\mathrm{o}$ (open access), the interference power from $B_0$ decreases with $D$, which enhances SIR. As a result, $T_\mathrm{a}$ decreases with increasing $D$ indicating that increasing the area of $\mathcal{F}_\mathrm{a}$ counteracts the effects of decreasing cross-tier interference from $B_0$. Moreover, $T_\mathrm{o}^\mathrm{OA}$ increases for small $D$, but decreases for large $D$, which indicates that a negative effect of increasing the area of $\mathcal{F}_\mathrm{o}$ on $T_\mathrm{o}^\mathrm{OA}$ becomes dominant at higher $D$.
Third, for the zone $\mathcal{F}_\mathrm{i}$, no negative effect on SIR is observed since the zone area is fixed (independent of $D$). Thus, $T_\mathrm{i}$ monotonically increases with $D$.

\section{Per-Tier User Throughput: Closed Access vs. Open Access}\label{sec:Tput}
In this section, we analyze the per-tier user throughput of closed and open access based on the number of users in the zone as well as the per-zone throughput obtained in preceding section. Denote $U_\mathrm{a}$, $U_\mathrm{b}$, $U_\mathrm{i}$, and $U_\mathrm{o}$ is the number of users in the zone $\mathcal{F}_\mathrm{a}$, $\mathcal{F}_\mathrm{b}$, $\mathcal{F}_\mathrm{i}$, and $\mathcal{F}_\mathrm{o}$, respectively. Let $U_\mathrm{c}$ and $U_\mathrm{h}$ denote the number of outdoor cellular users and the number of home users per femtocell, respectively. When $U_\mathrm{h}$ is assumed to be fixed for $N_\mathrm{f}$ femtocells, total average number of user in a cell cite is then given by
\begin{eqnarray}\label{eq:U-tot}
U= U_\mathrm{c} + N_\mathrm{f}U_\mathrm{h}
= U_\mathrm{c} + (N_\mathrm{f1}+N_\mathrm{f2})U_\mathrm{h}
\mathop =\limits^{\left( a \right)} U_\mathrm{c} + N_\mathrm{f1}(U_\mathrm{a}+U_\mathrm{b})+N_\mathrm{f2}U_\mathrm{i},
\end{eqnarray}
where $N_\mathrm{f1}=N_\mathrm{f} (D_\mathrm{th}/R_\mathrm{c})^2$ and $N_\mathrm{f2}=N_\mathrm{f} (1-(D_\mathrm{th}/R_\mathrm{c})^2)$ is respectively the average number of femtocells in the inner region ($D\leq D_\mathrm{th}$) and the outer region ($D > D_\mathrm{th}$). Furthermore, (a) follows from that $U_\mathrm{h}=U_\mathrm{i}=U_\mathrm{a}+U_\mathrm{b}$.

\subsection{Closed Access}\label{sec:TputCA}
Consider a reference FAP in closed access at distance $D$ from a central MBS. For the inner region, home users in $\mathcal{F}_\mathrm{i}$ connect to the FAP, while the neighboring cellular users of the FAP (implying users in $\mathcal{F}_\mathrm{o}$) are served by the MBS. For the outer region, regardless of femtocell access scheme, the home users in $\mathcal{F}_\mathrm{a}$ connect to the FAP, but the remaining home users in $\mathcal{F}_\mathrm{b}$ communicate to the MBS. The home users in $\mathcal{F}_\mathrm{b}$ share the same frequency channel with cellular users by using different time slots. Based on the femtocell/macrocell access scenario of the users, the following theorem quantifies per-tier user throughput in closed access.

\begin{thm}
In closed access, the average sum throughput of home users $T_\mathrm{h}^{\mathrm{CA}}$ and neighboring cellular users $T_\mathrm{c}^{\mathrm{CA}}$ with respect to a FAP at distance $D$ from a central MBS is given as
\begin{eqnarray}
T_\mathrm{h}^{\mathrm{CA}}(D) &=& \left\{ \begin{array}{ll}
T_\mathrm{a}(D) + \rho_\mathrm{b}^{\mathrm{CA}} T_\mathrm{b}(D) & D \leq D_\mathrm{th}\\
T_\mathrm{i}(D) & D > D_\mathrm{th}
\end{array} \right.\label{eq:T-f-CA}\\
T_\mathrm{c}^{\mathrm{CA}}(D) &=& \rho_\mathrm{o} T_\mathrm{o}^{\mathrm{CA}}(D),~~ D > D_\mathrm{th}
\label{eq:T-c-CA}
\end{eqnarray}
where the per-zone throughput $T_\mathrm{a}(D)$, $T_\mathrm{b}(D)$, $T_\mathrm{i}(D)$, and $T_\mathrm{o}^{\mathrm{CA}}(D)$ is given from (\ref{eq:T-Fall}). $\rho_\mathrm{b}^{\mathrm{CA}}$ and $\rho_\mathrm{o}$ is the {\it fraction of time-slot} dedicated to the home users in $\mathcal{F}_\mathrm{b}$ and the cellular users in $\mathcal{F}_\mathrm{o}$, respectively, among all users supported by the MBS, which is given as
\begin{align}
\rho_\mathrm{b}^{\mathrm{CA}} &= \frac{U_\mathrm{h} \left(1-2\mathcal{K}D^2\right)} {U_\mathrm{c} + N_\mathrm{f1} U_\mathrm{h} \left(1-\mathcal{K}D_\mathrm{th}^2\right)},
\label{eq:thm1rho-CA}\\
\rho_\mathrm{o} &= \frac{U_\mathrm{c} R_\mathrm{i}^2 (2\mathcal{K}D^2-1)}{(R_\mathrm{c}^2-N_\mathrm{f}R_\mathrm{i}^2)
(U_\mathrm{c} + N_\mathrm{f1} U_\mathrm{h} (1-\mathcal{K}D_\mathrm{th}^2)) },
\label{eq:thm1rho-o}
\end{align}
where $\mathcal{K}=\tfrac{\kappa^{2/\alpha} }{ 2 R_\mathrm{i}^2 (\kappa^{2/\alpha}-1)^2 }$.
\end{thm}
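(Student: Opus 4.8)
The plan is to reduce the theorem to a bookkeeping exercise in round-robin scheduling plus two counting arguments driven by the geometry of Lemma 1 and the spatial (uniform / PPP) user distributions. The key observation is that under round-robin with equal time slots, a base station serving $M$ users in total gives each user a fraction $1/M$ of the resource, so a group of $n$ users lying in a common zone with per-user spatially-averaged throughput $T_x$ (as given by (\ref{eq:T-Fall})) contributes a \emph{sum} throughput of $(n/M)\,T_x$. With this, the whole theorem follows once I (i) map each user to its serving station under closed access in each region and (ii) count the populations involved.

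For the association map I would argue region by region. In the outer region ($D>D_\mathrm{th}$) the FAP covers the entire home disc $\mathcal{F}_\mathrm{i}$, so all $U_\mathrm{h}=U_\mathrm{i}$ home users occupy the FAP alone, giving home sum throughput $(U_\mathrm{i}/U_\mathrm{i})\,T_\mathrm{i}=T_\mathrm{i}$; the neighboring cellular users in $\mathcal{F}_\mathrm{o}$ are denied the FAP and fall back to the MBS, contributing $\rho_\mathrm{o}\,T_\mathrm{o}^{\mathrm{CA}}$ with $\rho_\mathrm{o}=U_\mathrm{o}/M$ where $M$ is the MBS load. In the inner region ($D\le D_\mathrm{th}$) the FAP covers only the sub-disc $\mathcal{F}_\mathrm{a}$, so its $U_\mathrm{a}$ home users monopolize the FAP (coefficient $1$, yielding $T_\mathrm{a}$), while the remaining $U_\mathrm{b}$ home users in $\mathcal{F}_\mathrm{b}$ are served by the MBS, contributing $\rho_\mathrm{b}^{\mathrm{CA}}\,T_\mathrm{b}$ with $\rho_\mathrm{b}^{\mathrm{CA}}=U_\mathrm{b}/M$. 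This reproduces the two cases of (\ref{eq:T-f-CA}) and the form of (\ref{eq:T-c-CA}).

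The numerators then come from elementary area ratios. Home users are uniform on the disc of radius $R_\mathrm{i}$, so $U_\mathrm{b}=U_\mathrm{h}(R_\mathrm{i}^2-R_\mathrm{f}^2)/R_\mathrm{i}^2$; substituting $R_\mathrm{f}^2$ from (\ref{eq:Rf}) and recognizing the identity $R_\mathrm{f}^2/R_\mathrm{i}^2=2\mathcal{K}D^2$ (which is exactly how $\mathcal{K}$ is defined) gives $U_\mathrm{b}=U_\mathrm{h}(1-2\mathcal{K}D^2)$. Cellular users are uniform on the outdoor area $\pi(R_\mathrm{c}^2-N_\mathrm{f}R_\mathrm{i}^2)$, i.e. the cell minus the $N_\mathrm{f}$ indoor discs, so the $\mathcal{F}_\mathrm{o}$ annulus (area $\pi(R_\mathrm{f}^2-R_\mathrm{i}^2)$) contains $U_\mathrm{o}=U_\mathrm{c}(R_\mathrm{f}^2-R_\mathrm{i}^2)/(R_\mathrm{c}^2-N_\mathrm{f}R_\mathrm{i}^2)=U_\mathrm{c}R_\mathrm{i}^2(2\mathcal{K}D^2-1)/(R_\mathrm{c}^2-N_\mathrm{f}R_\mathrm{i}^2)$. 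These are precisely the numerators of $\rho_\mathrm{b}^{\mathrm{CA}}$ and $\rho_\mathrm{o}$.

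The denominator $M$ and its PPP average are where the real care is needed, and I expect this to be the main obstacle. The MBS serves every cellular user ($U_\mathrm{c}$) plus the $\mathcal{F}_\mathrm{b}$ home users of \emph{all} inner-region femtocells. For a homogeneous PPP the inner femtocell distances are area-weighted on $[0,D_\mathrm{th}]$, giving $\mathbb{E}[D'^2]=D_\mathrm{th}^2/2$, so the average $\mathcal{F}_\mathrm{b}$ population per inner femtocell is $U_\mathrm{h}(1-2\mathcal{K}\,\mathbb{E}[D'^2])=U_\mathrm{h}(1-\mathcal{K}D_\mathrm{th}^2)$; multiplying by $N_\mathrm{f1}$ and adding $U_\mathrm{c}$ yields $M=U_\mathrm{c}+N_\mathrm{f1}U_\mathrm{h}(1-\mathcal{K}D_\mathrm{th}^2)$, which is exactly the MBS-served subset of the total user count in (\ref{eq:U-tot}). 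Forming $\rho_\mathrm{b}^{\mathrm{CA}}=U_\mathrm{b}/M$ and $\rho_\mathrm{o}=U_\mathrm{o}/M$ then gives (\ref{eq:thm1rho-CA}) and (\ref{eq:thm1rho-o}). The subtle points to handle carefully are that the ensemble-average MBS load $M$ does not depend on the reference FAP's own $D$ (its single contribution is absorbed into the expected count, exact in expectation), and that cellular densities must use the outdoor area $R_\mathrm{c}^2-N_\mathrm{f}R_\mathrm{i}^2$ rather than the full cell area; by comparison the geometric substitution $R_\mathrm{f}^2/R_\mathrm{i}^2=2\mathcal{K}D^2$ is just routine glue.
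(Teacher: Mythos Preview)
Your proposal is correct and mirrors the paper's own proof essentially step for step: the same association map, the same round-robin bookkeeping $\rho=U_x/\overline{U}_\mathrm{m}^\mathrm{CA}$, the same area-ratio counts $U_\mathrm{b}=U_\mathrm{h}(1-(R_\mathrm{f}/R_\mathrm{i})^2)$ and $U_\mathrm{o}=U_\mathrm{c}(R_\mathrm{f}^2-R_\mathrm{i}^2)/(R_\mathrm{c}^2-N_\mathrm{f}R_\mathrm{i}^2)$, and the same PPP average $\mathbb{E}[D'^2]=D_\mathrm{th}^2/2$ to get the denominator $U_\mathrm{c}+N_\mathrm{f1}U_\mathrm{h}(1-\mathcal{K}D_\mathrm{th}^2)$. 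The two subtleties you flag (ensemble-averaged MBS load independent of the reference $D$, and the outdoor normalization $R_\mathrm{c}^2-N_\mathrm{f}R_\mathrm{i}^2$) are exactly the ones the paper handles, so nothing is missing.
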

\begin{proof} See Appendix~\ref{sec:PfThm1}.
\end{proof}

\begin{rmk}
From (\ref{eq:thm1rho-CA}) and Fig. \ref{fig:zoneTput}, increasing $D$ enhances $T_\mathrm{i}$ but reduces $\rho_\mathrm{b}^{\mathrm{CA}}$, $T_\mathrm{a}$, and $T_\mathrm{b}$. Therefore, in (\ref{eq:T-f-CA}) the home user throughput in closed access $T_\mathrm{h}^{\mathrm{CA}}$ decreases with $D$ in the inner region ($D\leq D_\mathrm{th}$) but increases with $D$ in the outer region ($D>D_\mathrm{th}$). Intuitively, the signal from the MBS is interference to home users in the outer region, but it is  the desired signal to some home users connecting to the MBS in the inner region. This results in throughput degradation (inner region) or improvement (outer region) by increasing $D$.
\end{rmk}
\begin{rmk}
From (\ref{eq:thm1rho-CA}), increasing $U_\mathrm{c}$ reduces $\rho_\mathrm{b}^{\mathrm{CA}}$, and $U_\mathrm{c}$ does not effect on $T_\mathrm{i}$, $T_\mathrm{a}$, and $T_\mathrm{b}$. Intuitively, many cellular users increase the MBS load and thereby the amount of radio resources allocated home user served by the MBS is decreased. This indicates that for femtocells in the inner region, $T_\mathrm{h}^{\mathrm{CA}}$ in (\ref{eq:T-c-CA}) is higher for a lower cellular user density, while it is independent of cellular user density in the outer region.
Since $\rho_\mathrm{o}$ increases with $U_\mathrm{c}$ in (\ref{eq:thm1rho-o}) and $U_\mathrm{c}$ does not effect on $T_\mathrm{o}^\mathrm{CA}$,  $T_\mathrm{c}^{\mathrm{CA}}$ in (\ref{eq:T-c-CA}) is high at a high cellular user density.
\end{rmk}
\subsection{Open Access}\label{sec:TputOA}
In the outer region, the reference FAP in open access provides service to neighboring cellular users in $\mathcal{F}_\mathrm{o}$ as well as home users in $\mathcal{F}_\mathrm{i}$. Thus, the home users share the downlink radio resource of the FAP with the cellular users in time division manner. On the other hand, in the inner region, the femtocell/macrocell access scenario of the home users in open access is the same as that in closed access. The following theorem quantifies per-tier user throughput in open access.
\begin{thm}
In open access, the average sum throughput of home users, $T_\mathrm{h}^{\mathrm{OA}}$, and neighboring cellular users, $T_\mathrm{c}^{\mathrm{OA}}$, with respect to a FAP at distance $D$ from a central MBS is given as
\begin{eqnarray}
T_\mathrm{h}^{\mathrm{OA}}(D)&=& \left\{ \begin{array}{ll}
T_\mathrm{a}(D) + \rho_\mathrm{b}^{\mathrm{OA}} T_\mathrm{b}(D) & D \leq D_\mathrm{th}\\
\rho_\mathrm{i} T_\mathrm{i}(D) & D > D_\mathrm{th}
\end{array} \right.\label{eq:T-f-OA}\\
T_\mathrm{c}^{\mathrm{OA}}(D) &=& (1-\rho_\mathrm{i}) T_\mathrm{o}^{\mathrm{OA}}(D),~~ D > D_\mathrm{th}
\label{eq:T-c-OA}
\end{eqnarray}
where $T_\mathrm{o}^{\mathrm{OA}}(D)$ is given from (\ref{eq:T-Fall}). $\rho_\mathrm{b}^{\mathrm{OA}}$ is the {\it fraction of time-slot} dedicated to the home users in $\mathcal{F}_\mathrm{b}$ among home and cellular users supported from the MBS, and $\rho_\mathrm{i}$ is the {\it fraction of time-slot} dedicated to the home users in $\mathcal{F}_\mathrm{i}$ among home and cellular users supported from the FAP. They are given as
\begin{align}
\rho_\mathrm{b}^{\mathrm{OA}} &= \frac{U_\mathrm{h} \left(1-2\mathcal{K}D^2\right)} {U_\mathrm{c} + N_\mathrm{f1} U_\mathrm{h} \left(1-\mathcal{K}D_\mathrm{th}^2\right)-N_\mathrm{f2} (\mathcal{K}(D_\mathrm{th}^2+R_\mathrm{c}^2)-1)
\frac{U_\mathrm{c}R_\mathrm{i}^2}{R_\mathrm{c}^2-N_\mathrm{f}R_\mathrm{i}^2}
}
\label{eq:thm2rho-OA}\\
\rho_\mathrm{i} &= \frac{U_\mathrm{i}} {U_\mathrm{i} + U_\mathrm{o} }= \left( 1+ \frac{U_\mathrm{c}R_\mathrm{i}^2(\mathcal{K}D^2-1)}
{U_\mathrm{i}(R_\mathrm{c}^2-N_\mathrm{f}R_\mathrm{i}^2)} \right)^{-1}
\label{eq:thm2rho-i},
\end{align}
\end{thm}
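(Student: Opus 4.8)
The plan is to derive each of the four expressions \eqref{eq:T-f-OA}--\eqref{eq:thm2rho-i} by combining the per-zone throughputs of \eqref{eq:T-Fall} with a round-robin time-sharing accounting of how many users each serving station must support. The underlying identity is that, under equal-slot round-robin, if a station serves $M$ users in total and $m$ of them lie in a zone $x$ whose spatially averaged single-user throughput is $T_\mathrm{x}$ (already supplied by the preceding lemmas through \eqref{eq:T-Fall}), then the \emph{sum} throughput contributed by those $m$ users is $(m/M)\,T_\mathrm{x}$, since each user is allotted a time fraction $1/M$ and users sharing a zone share the same SIR statistics. Every $\rho$ factor in the statement is then read off as a ratio (number of users in the group of interest)$/$(total users at their serving station).

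First I would treat the outer region ($D>D_\mathrm{th}$), the genuinely new case relative to closed access. Here the reference FAP serves, besides its $U_\mathrm{i}$ home users in $\mathcal{F}_\mathrm{i}$, the $U_\mathrm{o}$ cellular users in $\mathcal{F}_\mathrm{o}$. Setting $M=U_\mathrm{i}+U_\mathrm{o}$ in the identity gives $T_\mathrm{h}^{\mathrm{OA}}=\rho_\mathrm{i}T_\mathrm{i}$ and $T_\mathrm{c}^{\mathrm{OA}}=(1-\rho_\mathrm{i})T_\mathrm{o}^{\mathrm{OA}}$ with $\rho_\mathrm{i}=U_\mathrm{i}/(U_\mathrm{i}+U_\mathrm{o})$. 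It then remains to count $U_\mathrm{o}$: cellular users are uniform over the outdoor area of size $\pi(R_\mathrm{c}^2-N_\mathrm{f}R_\mathrm{i}^2)$ and $\mathcal{F}_\mathrm{o}$ is the annulus of area $\pi(R_\mathrm{f}^2-R_\mathrm{i}^2)$, so using the relation $R_\mathrm{f}^2=2\mathcal{K}R_\mathrm{i}^2D^2$ (which follows from Lemma 1 and the definition of $\mathcal{K}$) yields $U_\mathrm{o}=U_\mathrm{c}R_\mathrm{i}^2(2\mathcal{K}D^2-1)/(R_\mathrm{c}^2-N_\mathrm{f}R_\mathrm{i}^2)$; substituting into $\rho_\mathrm{i}=(1+U_\mathrm{o}/U_\mathrm{i})^{-1}$ produces the stated closed form \eqref{eq:thm2rho-i}.

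Next I would handle the inner region ($D\le D_\mathrm{th}$). The reference FAP's own access scenario is identical to closed access: the $U_\mathrm{a}$ home users in $\mathcal{F}_\mathrm{a}$ are served only by the FAP, giving sum throughput $T_\mathrm{a}$, while the $U_\mathrm{b}=U_\mathrm{h}(1-2\mathcal{K}D^2)$ home users in $\mathcal{F}_\mathrm{b}$ fall back to the MBS, contributing $\rho_\mathrm{b}^{\mathrm{OA}}T_\mathrm{b}$ with $\rho_\mathrm{b}^{\mathrm{OA}}=U_\mathrm{b}/U_{\mathrm{MBS}}^{\mathrm{OA}}$. The only quantity that differs from Theorem 1 is the MBS load $U_{\mathrm{MBS}}^{\mathrm{OA}}$, and computing it is where I expect the real work to lie. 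Unlike closed access, where the MBS carries all $U_\mathrm{c}$ cellular users, in open access every outer-region femtocell offloads its own $\mathcal{F}_\mathrm{o}$ cellular users onto its FAP. I would therefore average the per-femtocell counts over the PPP: conditioning on a femtocell in the inner (resp. outer) annulus and using the uniform-in-disc law gives $\mathbb{E}[D^2\mid D\le D_\mathrm{th}]=D_\mathrm{th}^2/2$ and $\mathbb{E}[D^2\mid D>D_\mathrm{th}]=(R_\mathrm{c}^2+D_\mathrm{th}^2)/2$. The expected $\mathcal{F}_\mathrm{b}$ contribution from the $N_\mathrm{f1}$ inner femtocells is then $N_\mathrm{f1}U_\mathrm{h}(1-\mathcal{K}D_\mathrm{th}^2)$, and the expected number of cellular users removed from the MBS by the $N_\mathrm{f2}$ outer femtocells is $N_\mathrm{f2}(\mathcal{K}(D_\mathrm{th}^2+R_\mathrm{c}^2)-1)U_\mathrm{c}R_\mathrm{i}^2/(R_\mathrm{c}^2-N_\mathrm{f}R_\mathrm{i}^2)$; adding the former to $U_\mathrm{c}$ and subtracting the latter gives exactly the denominator of \eqref{eq:thm2rho-OA}.

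The main obstacle is thus not any single integral but the correct spatial bookkeeping of the MBS load: one must distinguish the reference-femtocell quantities ($U_\mathrm{b}$, $U_\mathrm{o}$), evaluated at its own distance $D$, from the aggregate served population at the MBS, which is obtained by averaging the same per-femtocell expressions over the conditional distribution of femtocell distances in each region. Once the two conditional second moments of $D$ are in hand, every factor in \eqref{eq:thm2rho-OA}--\eqref{eq:thm2rho-i} follows from the time-sharing identity together with the area relation $R_\mathrm{f}^2=2\mathcal{K}R_\mathrm{i}^2D^2$, while the per-zone throughputs $T_\mathrm{a}$, $T_\mathrm{b}$, $T_\mathrm{i}$, and $T_\mathrm{o}^{\mathrm{OA}}$ are imported unchanged from \eqref{eq:T-Fall} and the lemmas that precede it.
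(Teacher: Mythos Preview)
Your proposal is correct and follows essentially the same route as the paper's proof: both argue via the round-robin identity that the sum throughput of a user group equals its time-slot fraction times the per-zone throughput, then compute those fractions by counting users---$U_\mathrm{b}=U_\mathrm{h}(1-2\mathcal{K}D^2)$, $U_\mathrm{o}=U_\mathrm{c}R_\mathrm{i}^2(2\mathcal{K}D^2-1)/(R_\mathrm{c}^2-N_\mathrm{f}R_\mathrm{i}^2)$---and averaging over femtocell positions via the conditional moments $\mathbb{E}[D^2\mid D\le D_\mathrm{th}]=D_\mathrm{th}^2/2$ and $\mathbb{E}[D^2\mid D>D_\mathrm{th}]=(R_\mathrm{c}^2+D_\mathrm{th}^2)/2$ to obtain the MBS load $\overline{U}_\mathrm{m}^{\mathrm{OA}}=U_\mathrm{c}-N_\mathrm{f2}\overline{U}_\mathrm{o}+N_\mathrm{f1}\overline{U}_\mathrm{b}$. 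The only cosmetic difference is that you treat the outer region first while the paper treats the inner region first; the ingredients and their assembly are identical.
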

\begin{proof} See Appendix~\ref{sec:PfThm2}.
\end{proof}

\begin{rmk}
First, in (\ref{eq:T-f-OA}), $T_\mathrm{h}^{\mathrm{OA}}$ decreases with $D$ ($D\leq D_\mathrm{th}$), since increasing $D$ lowers $\rho_\mathrm{b}^{\mathrm{OA}}$ in (\ref{eq:thm2rho-OA}), and $T_\mathrm{a}$ and $T_\mathrm{b}$ decrease with $D$ from Fig. \ref{fig:zoneTput}. From (\ref{eq:thm2rho-i}), increasing $D$ reduces $\rho_\mathrm{i}$. In Fig. \ref{fig:zoneTput} an increment in $T_\mathrm{i}$ decreases with $D$. Intuitively, $T_\mathrm{i}$ is upper limited by the highest rate of M-ary modulation at large $D$, while $\rho_\mathrm{i}$ is lower limited by zero. This indicates that $T_\mathrm{h}^{\mathrm{OA}}$ begins to decrease at sufficiently large $D$. Second, from (\ref{eq:thm2rho-i}), $1-\rho_\mathrm{i}$ increases with $U_\mathrm{c}$. Thus, $T_\mathrm{c}^{\mathrm{OA}}$ in (\ref{eq:T-c-OA}) is enhanced at a higher cellular user density.
\end{rmk}
The throughput comparison of both the access schemes is given in the remarks below.
\begin{rmk}
First, closed access rather than open access increases the number of users supported by the MBS and thereby $\rho_\mathrm{b}^{\mathrm{CA}} \leq \rho_\mathrm{b}^{\mathrm{OA}}$. This is followed by $T_\mathrm{h}^{\mathrm{CA}} \leq T_\mathrm{h}^{\mathrm{OA}}$ for the inner region. Intuitively, open access femtocells in the outer region admit neighboring cellular users which reduces the macrocell load. This effectively increases the throughput of home users in inner region, which are supported by the MBS. On the other hand, $T_\mathrm{h}^{\mathrm{CA}} \geq T_\mathrm{h}^{\mathrm{OA}}$ for the outer region because $\rho_\mathrm{i}\leq 1$ from (\ref{eq:thm2rho-i}). Note also from (\ref{eq:thm2rho-i}) that $1-\rho_\mathrm{i}=\frac{U_\mathrm{o}} {U_\mathrm{i} + U_\mathrm{o} }$. Since $U_\mathrm{i} + U_\mathrm{o} < U_\mathrm{c} + N_\mathrm{f1} \overline{U}_\mathrm{b}$, comparing $1-\rho_\mathrm{i}$ and (\ref{eq:rho-o}) yields  $1-\rho_\mathrm{i} > \rho_\mathrm{o}$. Moreover, $T_\mathrm{o}^\mathrm{OA}$ is obviously larger than $T_\mathrm{o}^\mathrm{CA}$, thus $T_\mathrm{c}^\mathrm{CA} < T_\mathrm{c}^\mathrm{OA}$. This indicates that home and cellular users prefer opposite access schemes.
\end{rmk}
\subsection{Numerical Results}\label{sec:TputOA}
The throughput results in this section are obtained with the system parameters in Table \ref{table_1}. Fig. \ref{fig:TputF} shows the home user throughput analytically obtained using (\ref{eq:T-f-CA}) and (\ref{eq:T-f-OA}) versus FAP-MBS distance $D$ for different numbers of femtocells $N_\mathrm{f}$ and cellular users $U_\mathrm{c}$. Since $R_\mathrm{f}=R_\mathrm{i}$ at $D=D_\mathrm{th}$, we obtain the distance $D_\mathrm{th}$ of 130m by substituting $R_\mathrm{i}=20$ into (\ref{eq:Rf}).
In closed access, home user throughput decreases with $D$ $(D\leq D_\mathrm{th})$, while it increases with $D$ $(D>D_\mathrm{th})$ per Remark 1. For $D>D_\mathrm{th}$ the home user throughput in open access first increases then decreases with increasing $D$. Additionally, Fig. \ref{fig:TputFwrtUc} shows that the turning point moves into the cell interior with increasing $U_\mathrm{c}$. This is because increasing $D$ and $U_\mathrm{c}$ increases the number of neighboring cellular users, and thus, the time resource allocated to home user in femtocell downlink is reduced.
In Fig. \ref{fig:TputFwrtNf}, the throughput for both open and closed access is degraded, since the aggregated interference from other femtocells increases with $N_\mathrm{f}$.
We observe that unlike the case of $D>D_\mathrm{th}$, open access outperforms closed access for $D\leq D_\mathrm{th}$. However, the throughput loss of home users at $D\leq D_\mathrm{th}$ dominates the home user throughput. Thus, closed access is better for home users.

Fig. \ref{fig:TputC} plots the sum throughput of neighboring cellular users of a reference femtocell using (\ref{eq:T-c-CA}) and (\ref{eq:T-c-OA}). The throughput is high at a low femtocell density and a high cellular user density, which agrees with the prediction in Remark 2 and 3. The throughput ($T_\mathrm{c}^\mathrm{CA} < 0.003$ bps/Hz) in closed access is too low to offer typical services (0.003 bps/Hz is equivalent to 15 kbps for 5 MHz bandwidth). Thus, open access is much better for neighboring cellular users, in contrast to the result for home users. Table \ref{table_2} summarizes these results of closed vs. open access.

Fig. \ref{fig:TputN} plots the network throughput in open access and closed access, sum of home user and neighboring cellular user throughput, i.e., $T^\mathrm{CA}=T_\mathrm{h}^\mathrm{CA}+T_\mathrm{c}^\mathrm{CA}
=T_\mathrm{i}+\rho_\mathrm{o}T_\mathrm{o}^\mathrm{CA}$ and $T^\mathrm{OA}=T_\mathrm{h}^\mathrm{OA}+T_\mathrm{c}^\mathrm{OA}
=\rho_\mathrm{i}T_\mathrm{i}+(1-\rho_\mathrm{i})T_\mathrm{o}^\mathrm{OA}$. Note that with respect to the network throughput for $D>D_\mathrm{th}$, open access is inferior to closed access. The reason is from the inequality given by $T^\mathrm{CA}-T^\mathrm{OA}=(1-\rho_\mathrm{i})(T_\mathrm{i}-T_\mathrm{o}^\mathrm{OA})
+\rho_\mathrm{o}T_\mathrm{o}^\mathrm{CA}>0$. Intuitively, since  $T_\mathrm{i}-T_\mathrm{o}^\mathrm{OA}>T_\mathrm{o}^\mathrm{OA}-T_\mathrm{o}^\mathrm{CA}$ from Fig. \ref{fig:zoneTput}, the decrement of home user throughput $T_\mathrm{i}$ due to time resource sharing with cellular users in open access prevails against the increment of cellular user throughput by substituting $T_\mathrm{o}^\mathrm{OA}$ for $T_\mathrm{o}^\mathrm{CA}$. In a different point of view, this implies that a slight increase in the time fraction $\rho_\mathrm{i}$ provides a high increase $T_\mathrm{h}^\mathrm{OA}$ at the cost of a slight drop in $T_\mathrm{c}^\mathrm{OA}$, i.e., an increase in network throughput $T^\mathrm{OA}$. This, as well as the extremely low throughput in closed access motivates the shard access femtocellls using time slot allocation, which will be in the next section.

\section{Shared Access: Time-slot Allocation}\label{sec:HA}
We consider the hybrid access where a FAP allocates $\eta$ fraction of time-slots to home users and the remaining $1-\eta$ fraction of time-slots to cellular users. Unlike the time-slot allocation in open access, where the time fraction $\rho_\mathrm{i}$ is dependent on the number of home users and cellular users, the time-slot allocation in the shared access optimizes $\eta$ to maximize the network throughput $T^\mathrm{SA}$ while satisfying QoS requirement.
The network throughput $T^\mathrm{SA}$ is defined as
\begin{equation}
T^\mathrm{SA} = \eta T_\mathrm{i} + (1-\eta) T_\mathrm{o}^\mathrm{OA}
, ~~ \eta \in [0,1].
\end{equation}
We define the QoS requirement as follows: 1) The average user throughput $\overline{T}_\mathrm{c}$ (cellular user) and $\overline{T}_\mathrm{h}$ (home user) is larger than the required minimum throughput $\Omega_\mathrm{c}$ (cellular user) and $\Omega_\mathrm{h}$ (home user), respectively, and 2) The average user throughput $\overline{T}_\mathrm{c}$ is at least $\varepsilon\in [0,1]$ w.r.t  the $\overline{T}_\mathrm{h}$.
Satisfying the QoS, the time-slot allocation problem to maximize the network throughput $T^\mathrm{SA}$ is formulated as
\begin{eqnarray}
\max_{0\leq\eta \leq 1} &\eta T_\mathrm{i} + (1-\eta) T_\mathrm{o}^\mathrm{OA} \label{eq:Timeslot}\\
\mathrm{subject~to} &\overline{T}_\mathrm{c} \geq \Omega_\mathrm{c},~
\overline{T}_\mathrm{h} \geq \Omega_\mathrm{h},
~ \overline{T}_\mathrm{c} \geq \varepsilon \overline{T}_\mathrm{h}\label{eq:Qos3}
\end{eqnarray}
where $\overline{T}_\mathrm{h}=\frac{\eta T_\mathrm{i}}{U_\mathrm{i}}$ and $\overline{T}_\mathrm{c}=\frac{(1-\eta) T_\mathrm{o}^\mathrm{OA}}{U_\mathrm{o}}$.
\begin{prop}
The optimal value $\eta^\ast$ of the time-slot allocation in (\ref{eq:Timeslot}) is given as
\begin{equation}\label{eq:Prop1}
\eta^\ast = \min \left( 1-\frac{\Omega_\mathrm{c}U_\mathrm{o}}{T_\mathrm{o}^\mathrm{OA}},~
\left(1+\varepsilon\frac{U_\mathrm{o}T_\mathrm{i}}{U_\mathrm{i}T_\mathrm{o}^\mathrm{OA}}\right)^{-1}
\right ),
\end{equation}
The solution $\eta^\ast$ is feasible when it is equal or larger than $\frac{\Omega_\mathrm{h}U_\mathrm{i}}{T_\mathrm{i}}$.
\end{prop}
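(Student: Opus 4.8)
The plan is to treat the program (\ref{eq:Timeslot})--(\ref{eq:Qos3}) as a one-dimensional linear program in $\eta$ and simply read off the optimum from the resulting feasible interval. First I would rewrite the objective as $\eta T_\mathrm{i}+(1-\eta)T_\mathrm{o}^\mathrm{OA}=T_\mathrm{o}^\mathrm{OA}+\eta(T_\mathrm{i}-T_\mathrm{o}^\mathrm{OA})$, which is affine in $\eta$ with slope $T_\mathrm{i}-T_\mathrm{o}^\mathrm{OA}$. Since Fig.~\ref{fig:zoneTput} and the ordering $T_\mathrm{i}-T_\mathrm{o}^\mathrm{OA}>T_\mathrm{o}^\mathrm{OA}-T_\mathrm{o}^\mathrm{CA}>0$ established in Section~\ref{sec:Tput} give $T_\mathrm{i}>T_\mathrm{o}^\mathrm{OA}$, this slope is positive. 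Hence the objective is strictly increasing in $\eta$, and the maximizer is attained at the \emph{largest} feasible $\eta$.

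Next I would convert each QoS constraint in (\ref{eq:Qos3}), together with the box constraint $\eta\in[0,1]$, into an explicit bound on $\eta$ using $\overline{T}_\mathrm{h}=\eta T_\mathrm{i}/U_\mathrm{i}$ and $\overline{T}_\mathrm{c}=(1-\eta)T_\mathrm{o}^\mathrm{OA}/U_\mathrm{o}$. The cellular-rate constraint $\overline{T}_\mathrm{c}\geq\Omega_\mathrm{c}$ rearranges to the upper bound $\eta\leq 1-\Omega_\mathrm{c}U_\mathrm{o}/T_\mathrm{o}^\mathrm{OA}$; the fairness constraint $\overline{T}_\mathrm{c}\geq\varepsilon\overline{T}_\mathrm{h}$, whose left side decreases and right side increases in $\eta$, rearranges to the upper bound $\eta\leq(1+\varepsilon U_\mathrm{o}T_\mathrm{i}/(U_\mathrm{i}T_\mathrm{o}^\mathrm{OA}))^{-1}$; and the home-rate constraint $\overline{T}_\mathrm{h}\geq\Omega_\mathrm{h}$ gives the lower bound $\eta\geq\Omega_\mathrm{h}U_\mathrm{i}/T_\mathrm{i}$. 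Because the objective is increasing, the maximizer equals the smallest of all upper bounds, namely $\eta^\ast=\min(1,\,1-\Omega_\mathrm{c}U_\mathrm{o}/T_\mathrm{o}^\mathrm{OA},\,(1+\varepsilon U_\mathrm{o}T_\mathrm{i}/(U_\mathrm{i}T_\mathrm{o}^\mathrm{OA}))^{-1})$.

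To collapse this into the two-term form of (\ref{eq:Prop1}), I would show the box bound $\eta\leq 1$ is redundant: both $1-\Omega_\mathrm{c}U_\mathrm{o}/T_\mathrm{o}^\mathrm{OA}\leq 1$ (since $\Omega_\mathrm{c},U_\mathrm{o},T_\mathrm{o}^\mathrm{OA}>0$) and $(1+\varepsilon U_\mathrm{o}T_\mathrm{i}/(U_\mathrm{i}T_\mathrm{o}^\mathrm{OA}))^{-1}\leq 1$ (since $\varepsilon\geq 0$ makes the bracket at least $1$), so the leading $1$ can never attain the minimum and is dropped. Feasibility of the whole program then reduces to the feasible interval $[\Omega_\mathrm{h}U_\mathrm{i}/T_\mathrm{i},\,\eta^\ast]$ being nonempty, i.e. the lower bound not exceeding the chosen upper bound, which is precisely the stated condition $\eta^\ast\geq\Omega_\mathrm{h}U_\mathrm{i}/T_\mathrm{i}$ (and since $\Omega_\mathrm{h}U_\mathrm{i}/T_\mathrm{i}\geq 0$, the remaining box bound $\eta\geq 0$ is automatically subsumed).

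The single genuine ingredient, more than an obstacle, is pinning down the sign of the slope $T_\mathrm{i}-T_\mathrm{o}^\mathrm{OA}$: the entire ``push $\eta$ to its largest feasible value'' argument reverses if this slope were nonpositive, so I would invoke the established inequality $T_\mathrm{i}>T_\mathrm{o}^\mathrm{OA}$ explicitly rather than leave it implicit. Everything else is routine rearrangement of linear inequalities in a single scalar, and the three constraints conveniently split into two upper bounds and one lower bound, making the $\min$ structure of $\eta^\ast$ transparent.
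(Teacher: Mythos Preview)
Your proposal is correct and follows essentially the same approach as the paper: both identify the objective as affine and increasing in $\eta$ (using $T_\mathrm{i}>T_\mathrm{o}^\mathrm{OA}$), convert the three QoS constraints into a feasible interval for $\eta$, and take the right endpoint. Your treatment is slightly more careful in that you explicitly argue the box constraints $0\le\eta\le1$ are redundant, which the paper leaves implicit.
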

\begin{proof}
Denote $\mathcal{Q}_1$, $\mathcal{Q}_2$, and $\mathcal{Q}_3$ as a set of $\eta$ satisfying the QoS requirement (\ref{eq:Qos3}) in the order of description, respectively. we then obtain intersection of the three sets as $\mathcal{Q} = \mathcal{Q}_1\cap \mathcal{Q}_2 \cap \mathcal{Q}_3 =
\Big\{ \eta \mid \frac{\Omega_\mathrm{h}U_\mathrm{i}}{T_\mathrm{i}} \leq \eta \leq \min \left( 1-\frac{\Omega_\mathrm{c}U_\mathrm{o}}{T_\mathrm{o}^\mathrm{OA}},~
\left(1+\varepsilon\frac{U_\mathrm{o}T_\mathrm{i}}{U_\mathrm{i}T_\mathrm{o}^\mathrm{OA}}\right)^{-1}
 \right) \Big \}$ for $\frac{\Omega_\mathrm{h}U_\mathrm{i}}{T_\mathrm{i}}\leq
\min  \left( 1-\frac{\Omega_\mathrm{c}U_\mathrm{o}}{T_\mathrm{o}^\mathrm{OA}},~
\left(1+\varepsilon\frac{U_\mathrm{o}T_\mathrm{i}}{U_\mathrm{i}T_\mathrm{o}^\mathrm{OA}}\right)^{-1}
 \right)$.
Define a function of $\eta$ as $f(\eta) = \eta T_\mathrm{i} + (1-\eta) T_\mathrm{o}^\mathrm{OA}=(T_\mathrm{i}-T_\mathrm{o}^\mathrm{OA})\eta + T_\mathrm{o}^\mathrm{OA}$. Since $T_\mathrm{i}>T_\mathrm{o}^\mathrm{OA}$, $f(\eta)$ monotonically increases with $\eta$. Thus, $\eta^\ast$ is the maximum $\eta \in \mathcal{Q}$, which yields (\ref{eq:Prop1}). Moreover, since $\mathcal{Q}=\emptyset$ for $\frac{\Omega_\mathrm{h}U_\mathrm{i}}{T_\mathrm{i}} >
\min  \Big[ 1-\frac{\Omega_\mathrm{c}U_\mathrm{o}}{T_\mathrm{o}^\mathrm{OA}},~
\left(1+\varepsilon\frac{U_\mathrm{o}T_\mathrm{i}}{U_\mathrm{i}T_\mathrm{o}^\mathrm{OA}}\right)^{-1}
 \Big]$, $\eta^\ast$ is feasible for $\eta^\ast \geq \frac{\Omega_\mathrm{h}U_\mathrm{i}}{T_\mathrm{i}}$.
\end{proof}
\begin{rmk}
Shared access with $\eta^\ast=1$ and $\eta^\ast=\rho_\mathrm{i}=\frac{U_i}{U_i+U_o}$ is closed and open access, respectively.
The QoS parameter $\varepsilon \in [0,1]$ determines the priority of home users relative to cellular users with $\varepsilon=1$ ensuring identical throughput to home and cellular users. In (\ref{eq:Prop1}), increasing $\varepsilon$ reduces $\eta^\ast$ and allocates more time-slots to cellular users. This indicates that shared access provides lower network throughput than open access when $\varepsilon$ is set high, e.g. such that $\eta^\ast < \frac{U_i}{U_i+U_o}$.
\end{rmk}
Fig. \ref{fig:TputN-HA} compares the network throughput for different femtocell access schemes, where the system parameters in Table \ref{table_1} are adopted. We assume the QoS requirement $\Omega_\mathrm{c}=0.01$ and $\Omega_\mathrm{h}=0.1$ respectively corresponding to 50 kbps and 500 kbps for 5 MHz bandwidth. Since $R_\mathrm{f}=R_\mathrm{i}$ at $D=D_\mathrm{th}$, we obtain the distance $D_\mathrm{th}$ of 130m by substituting $R_\mathrm{i}=20$ into (\ref{eq:Rf}). For $D>D_\mathrm{th}$, the throughput of shared access increases with decreasing $\varepsilon$. Considering lower $\varepsilon$ results in higher $\eta$, this indicates that increasing home user throughput $\eta T_\mathrm{i}$ counteracts the effects of decreasing cellular user throughput $(1-\eta) T_\mathrm{o}^\mathrm{OA}$. Moreover, this implies that the shared access with higher $\varepsilon$ (more time-slot allocation to cellular users) provides lower throughput than open access as shown in the result for $\varepsilon=0.1$.
For $D>D_\mathrm{th}$, closed access always provides higher throughput than shared access because shared access with $\eta=1$, which does not satisfy the QoS requirement, is the same as closed access. For $D\leq D_\mathrm{th}$, shared access obtains the same throughput as open access regardless of $\varepsilon$. The reason is that like open access, the shared access with time-slot allocation allows access from all neighboring cellular users located in the zone $\mathcal{F}_\mathrm{o}$. Note that the shared access with appropriate value of $\varepsilon$ achieves higher (at $D>D_\mathrm{th}$) or equal (at $D \leq D_\mathrm{th}$) network throughput than open access. We summarize these observations in Table \ref{table_2}.
\section{Conclusion}
\label{sec:conclusion}
The overall contribution of this paper is a new analytical framework for evaluating throughput tradeoffs regarding femtocell access schemes in downlink two-tier femtocell networks. The framework quantifies femtocell-site-specific ``loud neighbor'' effects and can be used to compare other techniques e.g.  power control, spectrum allocation, and MIMO. Our results show that unlike the uplink results in \cite{FemtoAccess3}, the preferred access scheme for home and cellular users is incompatible. In particular, closed access provides higher throughput for home users and lower throughput for neighboring cellular users; vice versa with open access. As a compromise, we suggest {\it shared access} where femtocells choose a time-slot ratio for their home and neighboring cellular users to maximize the network throughput subject to a network-wide QoS requirement. For femtocells within the outer area, shared access achieves higher network throughput than open access while satisfying the QoS of both home and cellular users. These results motivate shared access - i.e. open access, but with limits - in femtocell-enhanced cellular networks with universal frequency reuse.
\appendix[]
\subsection{Proof of Lemma 2}\label{sec:PfLem2}
In closed access the user SIR in (\ref{eq:SIR_FoCA}) is rewritten as $\gamma(R) = \frac{g_0}{K (I_1 +I_2)}$,
where $K=\frac{P_\mathrm{f} L }{P_\mathrm{c} D^{-\alpha}}$ and $I_1=h_0 R^{-\alpha}$ and $I_2=\sum_{j \in \Phi\backslash{A_0}}  h_{j} |X_{j}|^{-\alpha}$. Then, the complementary cumulative distribution function (CCDF) of the user SIR at distance $R$ from the FAP is given as
\begin{eqnarray}\label{eq:Lem2-2}
\mathbb{P}[\gamma(R) \geq \Gamma]=\mathbb{P}[g_0 \geq \Gamma K(I_1+I_2)] 
\mathop =\limits^{\left( a \right)}\int_0^\infty e^{-\Gamma K s} \mathrm{dPr}(I_1+I_2\leq s) 
\mathop =\limits^{\left( b \right)}\mathcal{L}_{I_1}(\Gamma K )\mathcal{L}_{I_2}(\Gamma K),
\end{eqnarray}
where $(a)$ follows because the CCDF of exponential $g_0$ with unit mean is given as $\mathbb{P}[g_0>t]=e^{-t}$, and $(b)$ is given from \cite[Lemma 3.1]{Baccelli}. Here $\mathcal{L}_{I_1}(\Gamma K)$ is the Laplace transform of $I_1$ (exponential random variable scaled by $R^{-\alpha}$), which is given as
\begin{eqnarray}\label{eq:Lem2-3}
\mathcal{L}_{I_1}(\Gamma K)=\int_0^\infty e^{-\Gamma K s} f_{I_1}(s)\mathrm{d}s 
=R^\alpha\int_0^\infty e^{-\Gamma K s} e^{-sR^\alpha} \mathrm{d}s 
=\frac{1}{\Gamma K R^{-\alpha}+1},
\end{eqnarray}
Moreover, $\mathcal{L}_{I_2}(\Gamma K)$ is the Laplace transform of the Poisson shot-noise process $I_2$. For exponential $h_{j}$ with unit mean, $\mathcal{L}_{I_2}(\Gamma K)$ is given by \cite{Baccelli}
\begin{equation}\label{eq:Lem2-4}
\mathcal{L}_{I_2}(\Gamma K )=e^{-\lambda C_\alpha (K\Gamma)^{2/\alpha} },
\end{equation}
where $C_\alpha=\frac{2\pi^2}{\alpha}\csc(\frac{2\pi}{\alpha})$.
Thus, (\ref{eq:Lem2-2}) is simplifies to
\begin{equation}\label{eq:Lem2-5}
\mathbb{P}[\gamma(R) \geq \Gamma ]=\frac{e^{-\lambda C_\alpha (K\Gamma)^{2/\alpha} }}{\Gamma K R^{-\alpha}+1}.
\end{equation}
The cellular users are uniformly located at the zone $\mathcal{F}_\mathrm{o}$ that a circular annulus with outer radius $R_\mathrm{f}$ and inner radius $R_\mathrm{i}$. Then, probability density function (PDF) of the distance $R$ is $f_R(r|R_\mathrm{i}\leq R \leq R_\mathrm{f})=\frac{2r}{R_\mathrm{f}^2-R_\mathrm{i}^2}$. The spatially averaged SIR distribution over $\mathcal{F}_\mathrm{o}$ is given as
\begin{align}\label{eq:Lem2-6}
S_{\mathrm{o}}^{\mathrm{CA}}(\Gamma)=\mathbb{E}_R \left[\mathbb{P}[\gamma(R) \leq \Gamma ]|R_\mathrm{i}\leq R \leq R_\mathrm{f}\right]
=1-\frac{2e^{-\lambda C_\alpha (K\Gamma)^{2/\alpha}}}{R_\mathrm{f}^2-R_\mathrm{i}^2}
\int_{R_\mathrm{i}}^{R_\mathrm{f}} \frac{R}{\Gamma K R^{-\alpha}+1} \mathrm{d}R.
\end{align}
Desired result (\ref{eq:S-CA}) is obtained by further calculating (\ref{eq:Lem2-6}) with the following integration formula \cite{HyperGeo}
\begin{equation}\label{eq:Lem2-7}
\int\frac{t}{at^{-\alpha}+1}\mathrm{d}t=
\frac{1}{2}t^2\left(1-~_2F_1\left[\frac{2}{\alpha},
1;1+\frac{2}{\alpha};-\frac{t^\alpha}{a}\right]\right).
\end{equation}

Next, in open access, the user SIR in (\ref{eq:SIR_FoOA}) is rewritten as $\gamma(R) = \frac{h_0}{R^{\alpha}(I_3 + I_4)}$,
where $I_3=K^{-1} g_0$ and $I_4=\sum_{j \in \Phi\backslash{A_0}}  h_{j} |X_{j}|^{-\alpha}$. Using the same approach as in (\ref{eq:Lem2-2}), CCDF of the user SIR at distance $R$ from the FAP is given as
\begin{eqnarray}\label{eq:Lem2-8}
\mathbb{P}[\gamma(R) \geq \Gamma]=\mathbb{P}[h_0 > \Gamma R^{\alpha}(I_3+I_4)] 
=\mathcal{L}_{I_3}(\Gamma R^{\alpha} )\mathcal{L}_{I_4}(\Gamma R^{\alpha}).
\end{eqnarray}
As $I_3$ is the exponential random variable scaled by $K^{-1}$, from (\ref{eq:Lem2-3}) we obtain $\mathcal{L}_{I_3}(\Gamma R^{\alpha} )=1/(\Gamma K^{-1} R^{\alpha} + 1)$. As the Poisson shot-noise process $I_4$ is equal to $I_2$, we get $\mathcal{L}_{I_4}(\Gamma R^{\alpha} )=e^{-\lambda C_\alpha \Gamma^{2/\alpha}R^2 }$. Thus, we get
\begin{equation}\label{eq:Lem2-9}
\mathbb{P}[\gamma(R) \geq \Gamma ]=\frac{e^{-\lambda C_\alpha \Gamma^{2/\alpha}R^2 }}{\Gamma K^{-1} R^{\alpha} + 1}.
\end{equation}
For open access, from (\ref{eq:Lem2-6}) and (\ref{eq:Lem2-9}) the spatially averaged SIR distribution over $\mathcal{F}_\mathrm{o}$ is given as
\begin{align}\label{eq:Lem2-10}
S_{\mathrm{o}}^{\mathrm{OA}}(\Gamma)&=\mathbb{E}_R \left[\mathbb{P}[\gamma(R) \leq \Gamma ]|R_\mathrm{i}\leq R \leq R_\mathrm{f}\right] \cr
&=1-\frac{1}{R_\mathrm{f}^2-R_\mathrm{i}^2}
\int_{R_\mathrm{i}}^{R_\mathrm{f}} \frac{2R}{\Gamma K^{-1} R^{\alpha}+1} e^{-\lambda C_\alpha \Gamma^{2/\alpha}R^2 }\mathrm{d}R.
\end{align}
Here, by using substitution $R^2=r$, we obtain $\mathrm{d}r=2R\mathrm{d}R$ and (\ref{eq:S-OA}).

In particular, for $\alpha=4$ and by using substitution $R^2=r$, (\ref{eq:Lem2-10}) is rewritten as
\begin{eqnarray}\label{eq:Lem2-11}
S_{\mathrm{o}}^{\mathrm{OA}}(\Gamma)=1-\frac{1}{R_\mathrm{f}^2-R_\mathrm{i}^2}
\int_{R_i^2}^{R_\mathrm{f}^2} \frac{e^{-r\lambda C_\alpha \sqrt{\Gamma} }}{\Gamma K^{-1} r^2+1} \mathrm{d}r 
= 1-\frac{B(R_\mathrm{f}^2)-B(R_\mathrm{i}^2)}{R_\mathrm{f}^2-R_\mathrm{i}^2},
\end{eqnarray}
with
\begin{align}\label{eq:Lem2-12}
B(x) = \int_{0}^{x}\frac{e^{yt}}{zt^2+1}\mathrm{d}t
&\mathop =\limits^{\left( a \right)}
-\tfrac{i}{2\sqrt{z}}
\Big[e^{iy/\sqrt{z}}
\{E_i\left(xy-iy/\sqrt{z}\right)-E_i\left(-iy/\sqrt{z}\right)\}\cr
&~+e^{-iy/\sqrt{z}}\{E_i\left(iy/\sqrt{z}\right)
-E_i\left(xy+iy/\sqrt{z}\right)\}\Big]\cr
&\mathop =\limits^{\left( b \right)}
\tfrac{1}{\sqrt{z}}\Big[\left(-\mathrm{Re}\{\mathrm{Ei}(i y/\sqrt{z})\}+\mathrm{Re}
\{\mathrm{Ei}(xy+iy/\sqrt{z})\}\right)\sin(y/\sqrt{z})\cr
&~+\left(\mathrm{Im}\{\mathrm{Ei}(i y/\sqrt{z})\}-
\mathrm{Im}\{\mathrm{Ei}(xy+iy/\sqrt{z})\}\right)\cos(y/\sqrt{z})\Big],
\end{align}
where $y=-\lambda C_\alpha \sqrt{\Gamma} ,~ z=\Gamma K^{-1}$. $(a)$ follows from the integral formula in \cite{Integral2} and $(b)$ is given on the mirror symmetry of Exponential integral function, i.e. $\mathrm{Ei}(\bar{z})=\overline{\mathrm{Ei}(z)}$. Combining (\ref{eq:Lem2-11}) and (\ref{eq:Lem2-12}) gives the desired result (\ref{eq:S-OA1}).

\subsection{Proof of Lemma 3}\label{sec:PfLem4}
The home user SIR in (\ref{eq:SIR_Fi}) is rewritten as $\gamma(R) = \frac{h_0}{R^{\beta}(I_1 + I_2)}$,
where $I_1=K g_0$ with $K=\frac{P_\mathrm{c} L}{P_\mathrm{f} D^{\alpha}}$ and $I_2=L^2\sum_{j \in \Phi\backslash{A_0}}  h_{j} |X_{j}|^{-\alpha}$. Using the way to obtain (\ref{eq:Lem2-9}), CCDF of the user SIR at distance $R$ from the FAP is given as
\begin{eqnarray}
\mathbb{P}[\gamma(R) \geq \Gamma ]
=\mathcal{L}_{I_1}(\Gamma R^{\beta} )\mathcal{L}_{I_2}(\Gamma R^{\beta})
=
\frac{e^{-\lambda C_\alpha (L^2 \Gamma)^{2/\alpha}R^{2\beta/\alpha} }}{\Gamma K R^{-\beta}+1},\label{eq:Lem4-2}
\end{eqnarray}
Assuming the home users are uniformly located in $\mathcal{F}_\mathrm{i}$, PDF of $R$ is $f_{R}(r|0\leq R \leq R_\mathrm{i})=\frac{2r}{R_\mathrm{i}^2}$. Thus, the spatially averaged SIR distribution of the home users is given as
\begin{eqnarray}\label{eq:Lem4-3}
S_{\mathrm{i}}(\Gamma)=\mathbb{E}_R \left[\mathbb{P}[\gamma(R) \leq \Gamma ]|0\leq R \leq R_\mathrm{i}\right] 
=1-\frac{2}{R_\mathrm{i}^2}
\int_{0}^{R_\mathrm{i}} \frac{R \cdot e^{-\lambda C_\alpha (L^2 \Gamma)^{2/\alpha}R^{2\beta/\alpha}}}{\Gamma K R^{\beta}+1} \mathrm{d}R,
\end{eqnarray}
which proves (\ref{eq:S-F}).

In particular, for $\alpha=4$ and $\beta=2$, (\ref{eq:Lem4-3}) is rewritten as
\begin{eqnarray}\label{eq:Lem4-4}
S_{\mathrm{i}}(\Gamma)=1-\frac{2}{R_\mathrm{i}^2}
\int_{0}^{R_\mathrm{i}} \frac{R \cdot e^{-\lambda C_\alpha L \sqrt{\Gamma}R}}{\Gamma K R^2+1} \mathrm{d}R 
= 1-\frac{H(R_\mathrm{i})}{R_\mathrm{i}^2}
\end{eqnarray}
with
\begin{align}\label{eq:Lem4-5}
H(x) = \int_{0}^{x}\frac{2te^{yt}}{zt^2+1}\mathrm{d}t
&\mathop =\limits^{\left( a \right)}
\tfrac{1}{z}
\Big[e^{iy/\sqrt{z}}
\{E_i\left(xy-iy/\sqrt{z}\right)-E_i\left(-iy/\sqrt{z}\right)\}\cr
&~+e^{-iy/\sqrt{z}}\{E_i\left(xy+iy/\sqrt{z}\right)
-E_i\left(iy/\sqrt{z}\right)\}\Big]\cr
&\mathop =\limits^{\left( b \right)}
\tfrac{2}{z}\Big[\left(\mathrm{Re}\{\mathrm{Ei}(xy+iy/\sqrt{z})\}
-\mathrm{Re}\{\mathrm{Ei}(iy/\sqrt{z})\}\right)\cos(y/\sqrt{z})\cr
&~+\left(\mathrm{Im}\{\mathrm{Ei}(xy+iy/\sqrt{z})\}-\mathrm{Im}\{\mathrm{Ei}(iy/\sqrt{z})\}
\right)\sin(y/\sqrt{z})\Big],
\end{align}
where $y=-\lambda C_\alpha L \sqrt{\Gamma}, ~ z=K\Gamma$. $(a)$ follows from the integral formula in \cite{Integral2} and $(b)$ is given on the mirror symmetry of Exponential integral function. Combining (\ref{eq:Lem4-4}) and (\ref{eq:Lem4-5}) gives the desired result (\ref{eq:S-F1}).

Moreover, for the path loss exponents $\alpha=\beta=4$, by substitution $R^2=r$, (\ref{eq:Lem4-3}) is rewritten as
\begin{eqnarray}\label{eq:Lem4-6}
S_{\mathrm{i}}(\Gamma)=1-\frac{1}{R_\mathrm{i}^2}
\int_{0}^{R_\mathrm{i}^2} \frac{ e^{-r\lambda C_\alpha L \sqrt{\Gamma}}}{\Gamma K r^2+1} \mathrm{d}r 
= 1-\frac{B(R_\mathrm{i}^2)}{R_\mathrm{i}^2},
\end{eqnarray}
where $B(x)$ is given from (\ref{eq:Lem2-12}). This gives the desired spatially averaged SIR distribution in (\ref{eq:S-F2}).

\subsection{Proof of Lemma 4}\label{sec:PfLem5}
The user SIR in (\ref{eq:SIR_FoCA}) is rewritten as $\gamma(R) = \frac{g_0}{K (I_1 +I_2)}$,
where $K=\frac{P_\mathrm{f} D^{\alpha} }{P_\mathrm{c} L}$ and $I_1=h_0 R^{-\beta}$ and $I_2=L^2 \sum_{j \in \Phi\backslash{A_0}}  h_{j} |X_{j}|^{-\alpha}$. Using the same approach as in (\ref{eq:Lem2-2}), CCDF of the user SIR at distance $R$ from the FAP is given as
\begin{eqnarray}\label{eq:Lem5-2}
\mathbb{P}[\gamma(R) \geq \Gamma ]
=\mathcal{L}_{I_1}(\Gamma K )\mathcal{L}_{I_2}(\Gamma K) 
=\frac{e^{-\lambda C_\alpha (L^2 K  \Gamma)^{2/\alpha} }}{\Gamma K R^{-\beta}+1}
\end{eqnarray}
The home users connected to the central MBS are uniformly located at the zone $\mathcal{F}_\mathrm{b}$ that a circular annulus with outer radius $R_\mathrm{i}$ and inner radius $R_\mathrm{f}$. Then, PDF of $R$ is $f_R(r|R_\mathrm{f}\leq R \leq R_\mathrm{i})=\frac{2r}{R_\mathrm{i}^2-R_\mathrm{f}^2}$. The spatially averaged SIR distribution over $\mathcal{F}_\mathrm{b}$ is given as
\begin{eqnarray}\label{eq:Lem5-6}
S_{\mathrm{b}}(\Gamma)=\mathbb{E}_R \left[\mathbb{P}[\gamma(R) \leq \Gamma]|R_\mathrm{f}\leq R \leq R_\mathrm{i}\right]
=1-\frac{2e^{-\lambda C_\alpha (L^2 K \Gamma)^{2/\alpha}}}{R_\mathrm{i}^2-R_\mathrm{f}^2}
\int_{R_\mathrm{f}}^{R_\mathrm{i}} \frac{R}{\Gamma K R^{-\beta}+1} \mathrm{d}R.
\end{eqnarray}
Applying (\ref{eq:Lem2-7}) to (\ref{eq:Lem5-6}), we proves (\ref{eq:S-Fb}).

\subsection{Proof of Theorem 1}\label{sec:PfThm1}
For a reference FAP at $D\leq D_\mathrm{th}$, the home users in $\mathcal{F}_\mathrm{a}$ connect to the FAP, while the remaining home users in $\mathcal{F}_\mathrm{b}$ communicate to the $B_0$. Thus, the average sum throughput of the home users is given as $T_\mathrm{h}^\mathrm{CA} = T_1 + T_2$,
where $T_1$ and $T_2$ is the average sum throughput of the home users in $\mathcal{F}_\mathrm{a}$ and $\mathcal{F}_\mathrm{b}$, respectively.
Since the FAP supports the home users in $\mathcal{F}_\mathrm{a}$ only, we obtain $T_1 = T_\mathrm{a}$,
On the other hand, since the MBS transmits data to cellular users as well as the remaining $U_\mathrm{b}$ home users in $\mathcal{F}_\mathrm{b}$, $T_2$ is given as $T_2 = \rho_\mathrm{b}^{\mathrm{CA}} T_\mathrm{b}$, and thus we get
\begin{equation}\label{eq:T-h-CA1}
T_\mathrm{h}^\mathrm{CA} = T_\mathrm{a} + \rho_\mathrm{b}^{\mathrm{CA}} T_\mathrm{b}, ~~ D\leq D_\mathrm{th}.
\end{equation}
Here $\rho_\mathrm{b}^{\mathrm{CA}}$ is the fraction of time-slot dedicated to the $U_\mathrm{b}$ home users among all $\overline{U}_\mathrm{m}^\mathrm{CA}$ users supported by the MBS with a RR scheduler, which is given as
\begin{eqnarray}
\rho_\mathrm{b}^{\mathrm{CA}} = U_\mathrm{b} / \overline{U}_\mathrm{m}^\mathrm{CA} = U_\mathrm{b} / (U_\mathrm{c} + N_\mathrm{f1} \overline{U}_\mathrm{b}).
\label{eq:rho-CA}
\end{eqnarray}
where $U_\mathrm{c}$ is the number of cellular users, and $N_\mathrm{f1}$ is the number of femtocells with $D\leq D_\mathrm{th}$.
Moreover, $U_\mathrm{b}$ is given as
\begin{eqnarray}
U_\mathrm{b} \mathop =\limits^{\left( a \right)} U_\mathrm{h} \left(1-\left(\tfrac{R_\mathrm{f}}{R_\mathrm{i}}\right)^2\right) 
\mathop =\limits^{\left( b \right)}
U_\mathrm{h} \left(1-\tfrac{\kappa^{2/\alpha}D^2}{(\kappa^{2/\alpha}-1)^2 R_\mathrm{i}^2} \right),
\label{eq:Ub}
\end{eqnarray}
where (a) is given on the uniform distribution assumption of home users and (b) follows from (\ref{eq:Rf}). In (\ref{eq:rho-CA}), $\overline{U}_\mathrm{b}$ denotes the average number of users in $\mathcal{F}_\mathrm{b}$, which given as
\begin{eqnarray}
\overline{U}_\mathrm{b} = \mathbb{E}[U_\mathrm{b}]
= U_\mathrm{h} \left(1-\tfrac{\kappa^{2/\alpha}}{(\kappa^{2/\alpha}-1)^2 R_\mathrm{i}^2}\mathbb{E}[D^2] \right)
\mathop =\limits^{\left( a \right)}
U_\mathrm{h} \left(1-\tfrac{\kappa^{2/\alpha}D_\mathrm{th}^2}{2(\kappa^{2/\alpha}-1)^2 R_\mathrm{i}^2} \right),
\label{eq:AvgUb}
\end{eqnarray}
where (a) follows from $\mathbb{E}[D^2]=\int_{0}^{D_\mathrm{th}} D^2 \left(\frac{2D}{D_\mathrm{th}^2}\right)\mathrm{d}D=D_\mathrm{th}^2/2$ for $D\leq D_\mathrm{th}$.
Combining (\ref{eq:rho-CA}), (\ref{eq:Ub}), and (\ref{eq:AvgUb}) gives the desired result in (\ref{eq:thm1rho-CA}).

Next, since a reference FAP with $D>D_\mathrm{th}$ supports $U_\mathrm{i}$ home users in the zone $\mathcal{F}_\mathrm{i}$ only, the average sum throughput, $T_\mathrm{h}^\mathrm{CA}$, of the home users is equal to $T_\mathrm{i}$, which proves (\ref{eq:T-f-CA}) for $D>D_\mathrm{th}$.
For a reference FAP at $D > D_\mathrm{th}$, its neighboring cellular users in the zone $\mathcal{F}_\mathrm{o}$ connect to the central MBS in closed access. Since the MBS using TDMA transmits data to other cellular users as well as the neighboring cellular users in $\mathcal{F}_\mathrm{o}$, the average sum throughput of the neighboring cellular users is given as
\begin{equation}
T_\mathrm{c}^\mathrm{CA} = \rho_\mathrm{o} T_\mathrm{o}^\mathrm{CA},
\end{equation}
where $\rho_\mathrm{o}$ is the fraction of time-slot dedicated to the $U_\mathrm{o}$ neighboring cellular users in $\mathcal{F}_\mathrm{o}$ among all $\overline{U}_\mathrm{m}^\mathrm{CA}$ users supported by the MBS with a RR scheduler, which is given as
\begin{eqnarray}
\rho_\mathrm{o} = U_\mathrm{o} / \overline{U}_\mathrm{m}^\mathrm{CA} = U_\mathrm{o} / (U_\mathrm{c} + N_\mathrm{f1} \overline{U}_\mathrm{b}),
\label{eq:rho-o}
\end{eqnarray}
where $U_\mathrm{o}$ is given as
\begin{eqnarray}
U_\mathrm{o} \mathop =\limits^{\left( a \right)} U_\mathrm{c} \tfrac{R_\mathrm{f}^2-R_\mathrm{i}^2}{R_\mathrm{c}^2-N_\mathrm{f}R_\mathrm{i}^2} 
\mathop =\limits^{\left( b \right)}
U_\mathrm{c} \frac{\kappa^{2/\alpha}D^2-(\kappa^{2/\alpha}-1)^2 R_\mathrm{i}^2}
{(\kappa^{2/\alpha}-1)^2(R_\mathrm{c}^2-N_\mathrm{f}R_\mathrm{i}^2)}.
\label{eq:Uo}
\end{eqnarray}
Here, (a) follows from the uniform distribution assumption of cellular users and (b) is given from (\ref{eq:Rf}). Combining (\ref{eq:AvgUb}), (\ref{eq:rho-o}), and (\ref{eq:Uo}) gives the desired result in (\ref{eq:thm1rho-o}).

\subsection{Proof of Theorem 2}\label{sec:PfThm2}
In open access, for a reference FAP at $D \leq D_\mathrm{th}$, the femtocell/macrocell access scenario of the home users in the zone $\mathcal{F}_\mathrm{o}$ and $\mathcal{F}_\mathrm{o}$  is the same as that in closed access. Thus, from (\ref{eq:T-h-CA1}) the average sum throughput of the home users is thus given
\begin{equation}\label{eq:T-h-OA1}
T_\mathrm{h}^\mathrm{OA} = T_\mathrm{a} + \rho_\mathrm{b}^{\mathrm{OA}} T_\mathrm{b}, ~~ D\leq D_\mathrm{th}
\end{equation}
where $\rho_\mathrm{b}^{\mathrm{OA}}$ is is the fraction of time-slot dedicated to the $U_\mathrm{b}$ home users among all $\overline{U}_\mathrm{m}^\mathrm{OA}$ users supported by the MBS with a RR scheduler, which is given as
\begin{eqnarray}
\rho_\mathrm{b}^{\mathrm{OA}} = U_\mathrm{b}/\overline{U}_\mathrm{m}^\mathrm{OA} = U_\mathrm{b} / (U_\mathrm{c} - N_\mathrm{f2} \overline{U}_\mathrm{o} + N_\mathrm{f1} \overline{U}_\mathrm{b})
\label{eq:rho-OA}
\end{eqnarray}
where $\overline{U}_\mathrm{m}^\mathrm{OA} = U_\mathrm{c} - N_\mathrm{f2} \overline{U}_\mathrm{o} + N_\mathrm{f1} \overline{U}_\mathrm{b}$ is the number of users served by the MBS in femtocell open access. Here $N_\mathrm{f2} \overline{U}_\mathrm{o}$ is the number of cellular users accessing to the FAP ($D>D_\mathrm{th}$) with open access. The average number of users in $\mathcal{F}_\mathrm{o}$, $\overline{U}_\mathrm{o}$, is given by
\begin{eqnarray}
\overline{U}_\mathrm{o} = \mathbb{E}[U_\mathrm{o}]
\mathop =\limits^{\left( a \right)}
U_\mathrm{c} \frac{\kappa^{2/\alpha}\mathbb{E}[D^2]-(\kappa^{2/\alpha}-1)^2 R_\mathrm{i}^2}
{(\kappa^{2/\alpha}-1)^2(R_\mathrm{c}^2-N_\mathrm{f}R_\mathrm{i}^2)}
\mathop =\limits^{\left( b \right)}
U_\mathrm{c} \frac{\kappa^{2/\alpha}(R_\mathrm{c}^2+D_\mathrm{th}^2)-2(\kappa^{2/\alpha}-1)^2 R_\mathrm{i}^2}
{2(\kappa^{2/\alpha}-1)^2(R_\mathrm{c}^2-N_\mathrm{f}R_\mathrm{i}^2)},
\label{eq:AvgUo}
\end{eqnarray}
where (a) is given from (\ref{eq:Uo}), and (b) follows from $\mathbb{E}[D^2]=\int_{D_\mathrm{th}}^{R_\mathrm{c}} D^2 \left(\frac{2D}{R_\mathrm{c}^2-D_\mathrm{th}^2}\right)\mathrm{d}D=(R_\mathrm{c}^2+D_\mathrm{th}^2)/2$ for $D > D_\mathrm{th}$. Combining (\ref{eq:Ub}), (\ref{eq:AvgUb}), (\ref{eq:rho-OA}), and (\ref{eq:AvgUo}) gives the desired result in (\ref{eq:thm2rho-OA}).

For a reference FAP at $D > D_\mathrm{th}$, since by using TDMA the MBS transmits data to the neighboring cellular users in $\mathcal{F}_\mathrm{o}$ as well as the home users in $\mathcal{F}_\mathrm{i}$, the average sum throughput of the home users is given as
\begin{eqnarray}
T_\mathrm{h}^\mathrm{OA} = \rho_\mathrm{i} T_\mathrm{i}, ~~
T_\mathrm{c}^\mathrm{OA} = (1-\rho_\mathrm{i}) T_\mathrm{o}^\mathrm{OA},
\end{eqnarray}
where $\rho_\mathrm{i}$ is the fraction of time slot dedicated to the home users in $\mathcal{F}_\mathrm{i}$ among home and cellular users supported from the FAP with a RR scheduler, which is given as
\begin{eqnarray}
\rho_\mathrm{i} = {U_\mathrm{i}} / (U_\mathrm{i} + U_\mathrm{o} )
\label{eq:rho-i},
\end{eqnarray}
Combining (\ref{eq:Uo}) and (\ref{eq:rho-i}) gives the desired result in (\ref{eq:thm2rho-i}).

\newpage
\begin{table}
\caption{Notation and simulation values}
\label{table_1}
\begin{center}
\begin{tabular}{c|l|c}
\hline
Symbol & Description & Sim. Value \\ \hline\hline
$\mathcal{F}_\mathrm{i}$ & Indoor area covered by the FAP at $D>D_\mathrm{th}$ (a disc with the radius $R_i$)& N/A\\
$\mathcal{F}_\mathrm{o}$ & Outdoor area covered by the FAP at $D>D_\mathrm{th}$ in open access or covered by the MBS & N/A\\
& in closed access (a circular annulus with inner radius $R_\mathrm{i}$ and outer radius $R_\mathrm{f}$) & \\
$\mathcal{F}_\mathrm{a}$ & Indoor area covered by the FAP at $D\leq D_\mathrm{th}$ (a disc with the radius $R_\mathrm{f}$)& N/A\\
$\mathcal{F}_\mathrm{b}$ &  Indoor area covered by the MBS (a circular annulus with inner radius $R_\mathrm{f}$ and & N/A
\\ &  outer radius $R_\mathrm{i}$ with respect to the FAP at $D\leq D_\mathrm{th}$)&\\
$D$ & Distance between FAP and central MBS & Not fixed\\
$D_\mathrm{th}$ & Threshold distance (Radius of inner region) & Not fixed \\
$D_\mathrm{c}$ & Distance between central MBS and homeuser (or neighboring cellular user) & Not fixed\\
$R$ & Distance between FAP and homeuser (or neighboring cellular user) & Not fixed\\
$R_\mathrm{f}$ & Femtocell radius & Not fixed\\
$R_\mathrm{c}$ & Macrocell radius &  500 m \\
$R_\mathrm{i}$ & Indoor (home) area radius & 20 m \\
$P_\mathrm{c}$ & Transmit power at macrocell & 43 dBm \cite{3GPPTS}\\
$P_\mathrm{f}$& Transmit power at femtocell & 13 dBm \cite{3GPPTS}\\
$\alpha$ & Outdoor path loss exponent & 4 \\ \
$\beta$ & Indoor path loss exponent & 4 \\
$L$ & Wall penetration loss & 0.5 (-3 dB) \\
$G$ & Shannon gap & 3 dB \\
$N$ & Number of discrete levels for M-ary modulation (M-QAM) & 8 \\
$\Omega_\mathrm{c}$ & Required minimum throughput of cellular user for hybrid access & 0.01 bps/Hz\\
$\Omega_\mathrm{h}$ & Required minimum throughput of home user for hybrid access & 0.1 bps/Hz\\ \hline
\end{tabular}
\end{center}
\end{table}

\begin{table}
\caption{Throughput comparison of open, closed, and shared access for FAP-MBS distance $D$, cellular user density $U_\mathrm{c}$, femtocell density $N_\mathrm{f}$}
\label{table_2}
\begin{center}
\begin{tabular}{c|c|c|c}
\hline
\multicolumn {2}{c|}{} & High $D$ and/or High $U_\mathrm{c}$ and/or Low $N_\mathrm{f}$ & Low $D$ and/or Low $U_\mathrm{c}$ and/or High $N_\mathrm{f}$ \\ \hline \hline
\multirow{2}{22mm}{Home user sum throughput} & inner region & Open = Shared $>$ Closed  & Open = Shared $\gg$ Closed \\ \cline{2-4}
& outer region & Closed $\gg$ Shared $\gg$ Open & Closed $>$ Shared $>$ Open \\ \hline
\multicolumn {2}{c|}{Cellular user sum throughput} & Open $\gg$ Shared $\gg$ Closed & Open $>$ Shared $>$ Closed \\ \hline
\multirow{3}{22mm} {Preferred access schemes } & home users & Closed access &   Closed access \\ \cline{2-4} & cellular users & Open access & Open access \\ \cline{2-4}
& home and cellular users & Shared access & Shared access \\ \hline
\end{tabular}
\end{center}
\end{table}

\begin{figure}
\begin{center}
   \includegraphics[width=4.in]{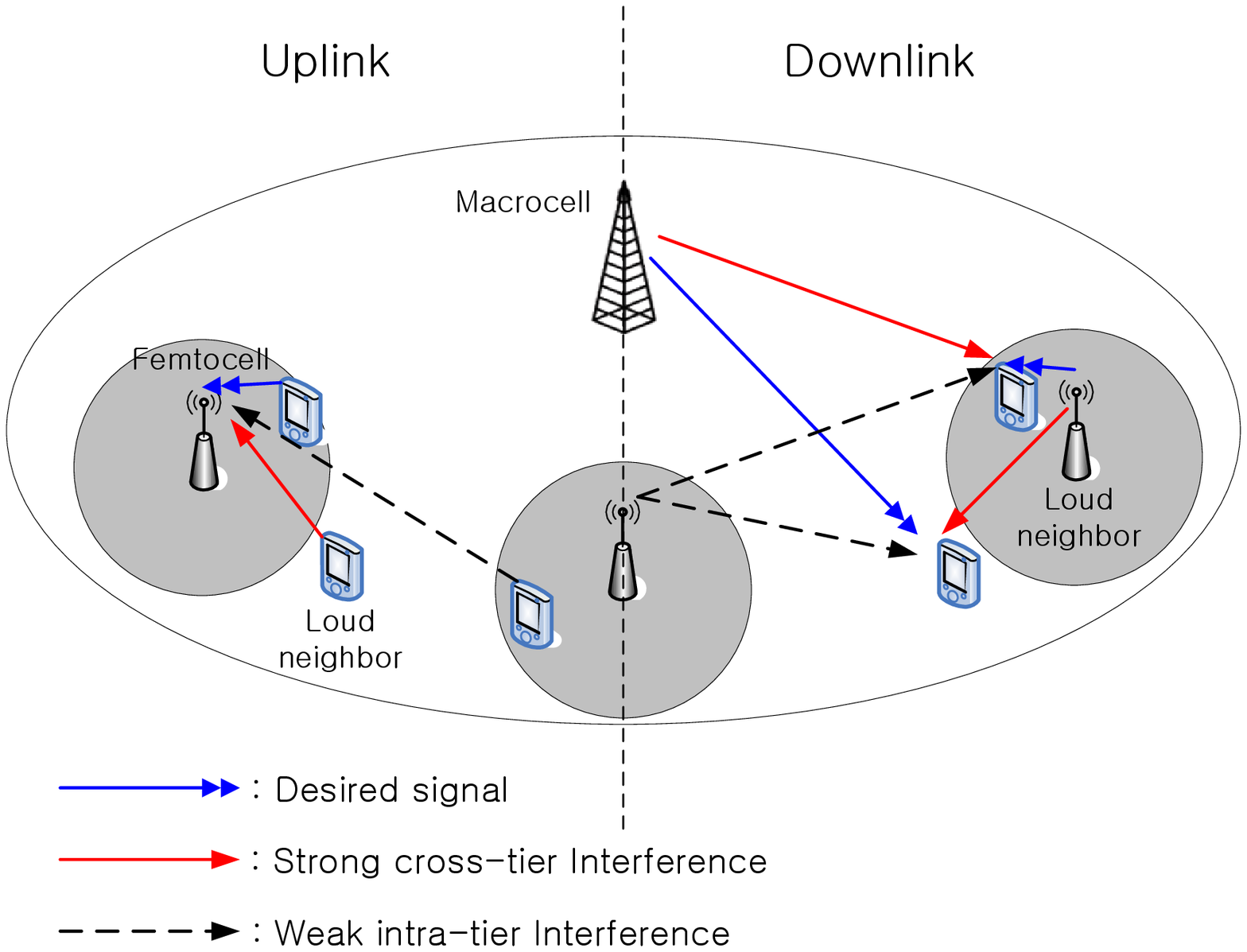}
    \caption{Loud neighbor effect in uplink and downlink two-tier femtocell networks}
    \label{fig:Loudneighbor}
\end{center}
\end{figure}

\begin{figure}
\begin{center}
   \includegraphics[width=4.in]{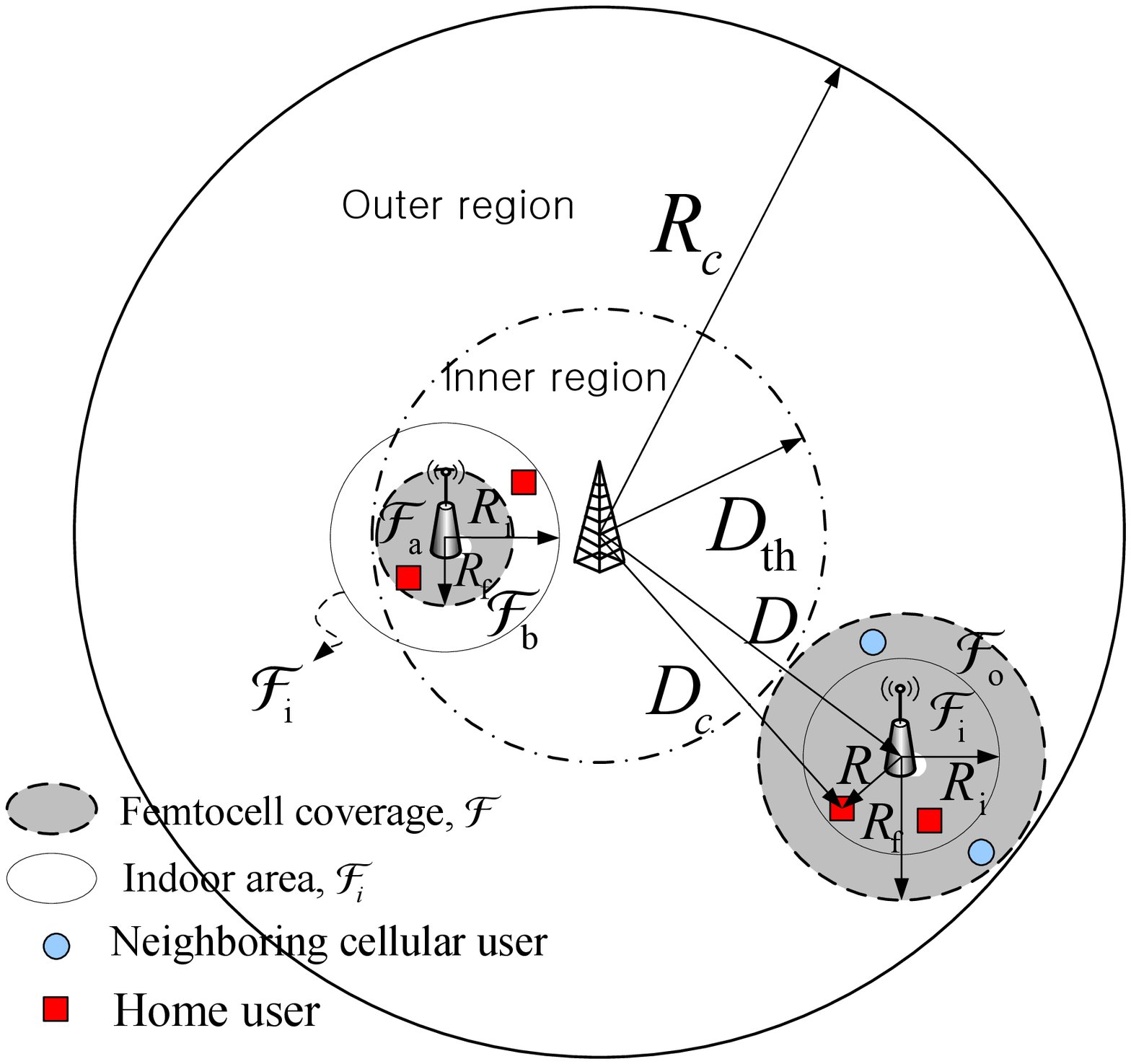}
    \caption{Femtocell coverage variation for the FAP-MBS distance $D$ and geometrical zone $\mathcal{F}_\mathrm{a}$, $\mathcal{F}_\mathrm{b}$, $\mathcal{F}_\mathrm{i}$, and $\mathcal{F}_\mathrm{o}$ in a two-tier femtocell networks with cochannel deployment. For $D>D_\mathrm{th}$, femtocell coverage $\mathcal{F}$ is larger than indoor area $\mathcal{F}_\mathrm{i}$, i.e., $R_\mathrm{f}>R_\mathrm{i}$.}
    \label{fig:coverage}
\end{center}
\end{figure}

\begin{figure}
\begin{center}
   \includegraphics[width=4.in]{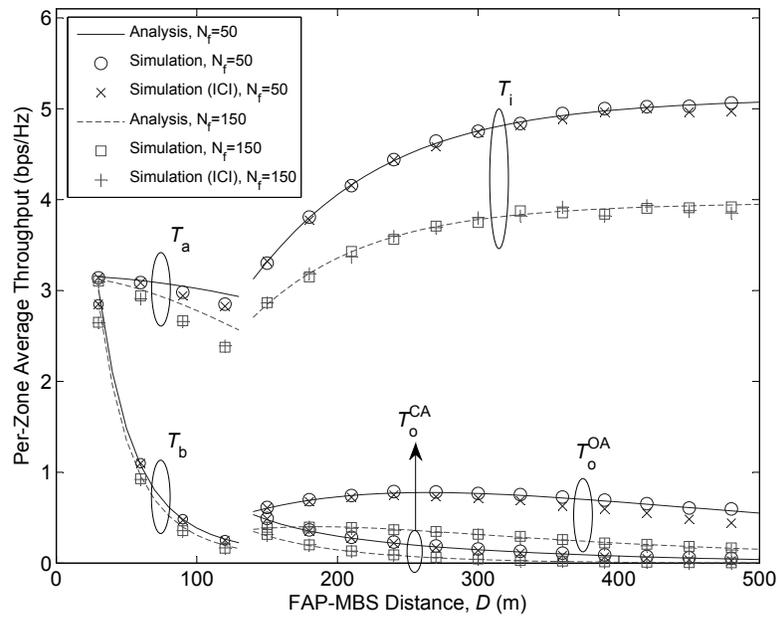}
    \caption{Theoretical and simulation results of spatially averaged throughput for the zone $\mathcal{F}_\mathrm{a}$, $\mathcal{F}_\mathrm{b}$, $\mathcal{F}_\mathrm{i}$, and $\mathcal{F}_\mathrm{o}$ ($D_\mathrm{th}=130 \textrm{m}$)}
    \label{fig:zoneTput}
\end{center}
\end{figure}

\begin{figure}[h]
\begin{center}
\subfigure[]{\epsfxsize=4in
\leavevmode\epsfbox{./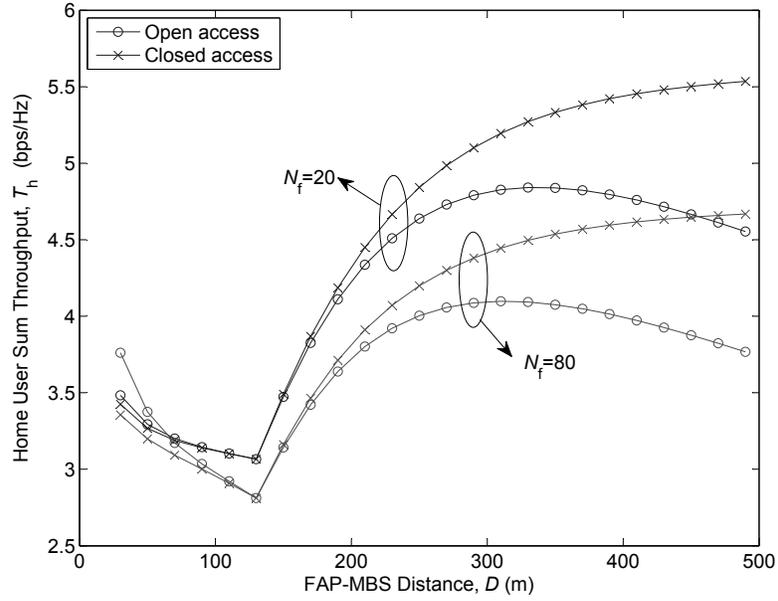}\label{fig:TputFwrtNf}}\\
\subfigure[]{\epsfxsize=4in
\leavevmode\epsfbox{./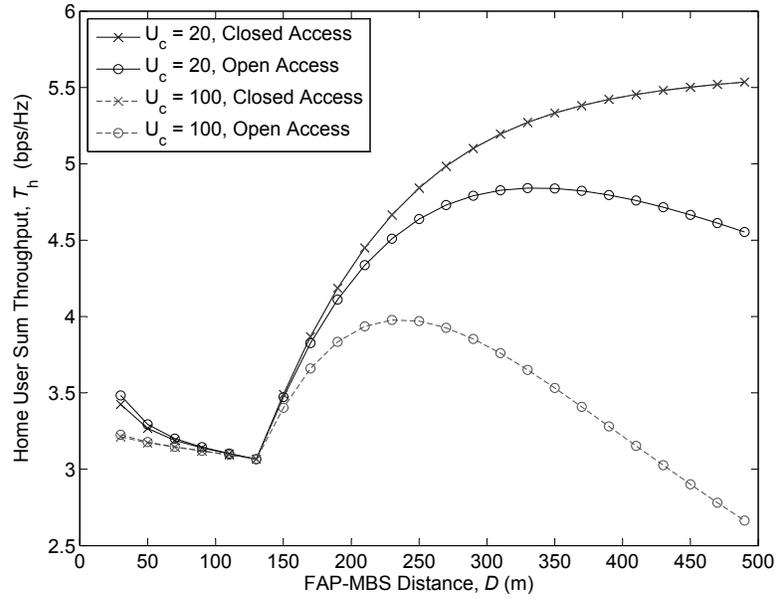}\label{fig:TputFwrtUc}}
\caption{Average sum throughput of the home users for different number of femtocells and cellular users ($D_\mathrm{th}=130 \textrm{m}$): (a) number of cellular users $U_\mathrm{c}=20$. (b) number of femtocells $N_\mathrm{f}=20$.} \label{fig:TputF}
\end{center}
\end{figure}


\begin{figure}
\begin{center}
   \includegraphics[width=4.in]{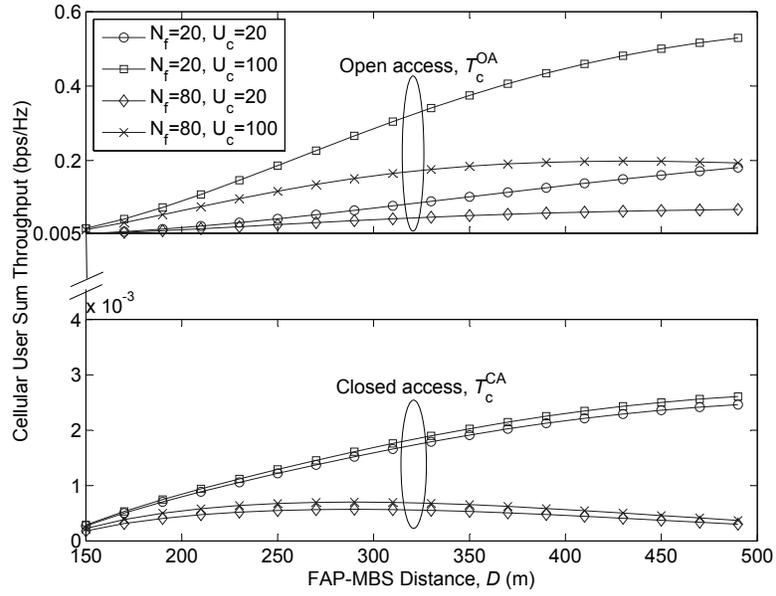}
    \caption{Average sum throughput of the neighboring cellular users for different number of femtocells and cellular users, ($D_\mathrm{th}=130 \textrm{m}$)}
    \label{fig:TputC}
\end{center}
\end{figure}

\begin{figure}
\begin{center}
   \includegraphics[width=4.in]{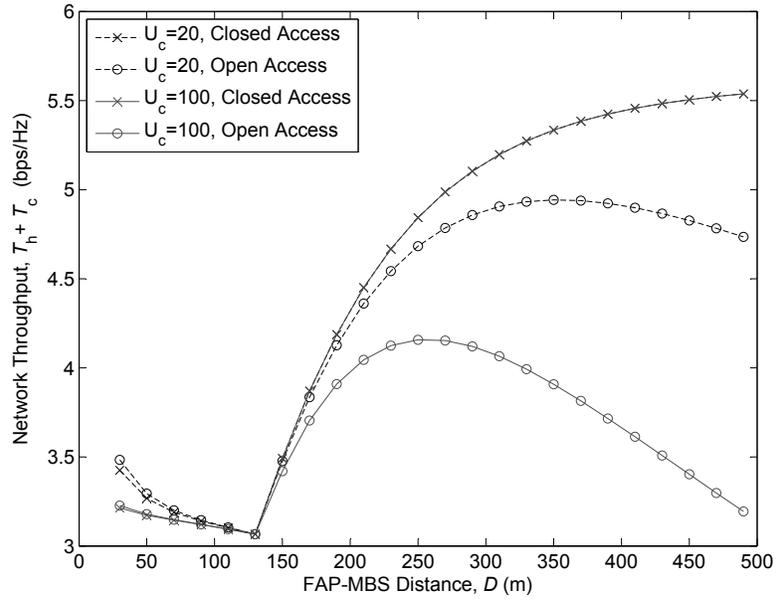}
    \caption{Network throughput for different number of cellular users, ($N_\mathrm{f}=20$ and $D_\mathrm{th}=130 \textrm{m}$)}
    \label{fig:TputN}
\end{center}
\end{figure}

\begin{figure}
\begin{center}
   \includegraphics[width=4.in]{}
    \caption{Network throughput for different femtocell access, ($N_\mathrm{f}=80$, $U_\mathrm{c}=100$, and $D_\mathrm{th}=130 \textrm{m}$)}
    \label{fig:TputN-HA}
\end{center}
\end{figure}
\end{document}